\documentclass[sn-mathphys-num]{sn-jnl}

\usepackage{graphicx}%
\usepackage{multirow}%
\usepackage{amsmath,amssymb,amsfonts}%
\usepackage{amsthm}%
\usepackage{mathrsfs}%
\usepackage[title]{appendix}%
\usepackage{xcolor}%
\usepackage{textcomp}%
\usepackage{manyfoot}%
\usepackage{booktabs}%
\usepackage{algorithm}%
\usepackage{algorithmicx}%
\usepackage{algpseudocode}%
\usepackage{listings}%
\usepackage{verbatim}

\theoremstyle{thmstyleone}%
\newtheorem{theorem}{Theorem}
\theoremstyle{thmstyletwo}%

\theoremstyle{thmstylethree}%

\begin{document}

\title{Shifted Eigenvector Models for Centrality and Occupancy in Urban Networks}

\author[1]{\fnm{María Magdalena} \sur{Martínez-Rico}}

\author*[2]{\fnm{Luis Felipe} \sur{Prieto-Martínez}}\email{luisfelipe.prieto@upm.es}

\equalcont{These authors contributed equally to this work.}

\affil[1]{\orgaddress{\city{Aranjuez}, \postcode{28300}, \state{Madrid}, \country{Spain}}}

\affil[2]{\orgdiv{Departmento de Matemática Aplicada}, \orgname{Universidad Politécnica de Madrid}, \orgaddress{\city{Madrid}, \postcode{28040}, \state{Madrid}, \country{Spain}}, \url{https://orcid.org/0000-0002-7825-4743}}

\abstract{This article investigates a family of centrality models for urban networks that incorporate both topological and non-topological factors. Since centrality is inherently recursive, these models can be formulated as fixed-point equations, which we refer to as \emph{shifted eigenproblems}. Assuming a correlation between node centrality and occupancy, we discuss how experimental data can be used to estimate model parameters via least-squares methods. Furthermore, such data would allow us to infer the \emph{intrinsic attraction} of each node, as well as the \emph{occupancy induced by must-visit points of interest}, a task that is conceptually challenging. Once the model parameters are fitted and validated, our framework can be used to assess the impact of urban interventions, such as introducing a \emph{must-visit point of interest} at a specific node or enhancing its \emph{intrinsic attraction}. The resulting sensitivity analysis is therefore highly relevant for urban planning decisions. We also provide explicit formulas to facilitate this analysis.
\medskip

\noindent \textbf{MSC codes:} 0A67, 05C50.
}

\keywords{Graph centrality, Urban networks, Occupancy models, Spectral centrality, Urban planning.}

\maketitle

\section{Introduction}

\subsection*{Centrality and urban networks}

Cities are complex systems where spatial configuration, human movement, and social interaction intertwine. Understanding why certain streets, intersections, or urban spaces become \emph{more central} than others is crucial not only for mobility and accessibility but also for urban design and the evolution of city form. In this paper, we examine these questions through the lens of network theory and mathematical centrality, bridging ideas from architecture, urbanism, and applied mathematics.

We model an urban network as a graph $(V, E)$ with vertex set $V = \{V_1, \ldots, V_n\}$, where nodes represent \emph{streets} or other \emph{spatial entities}. For technical details on this kind of graph construction, including \emph{segment maps} within \emph{Space Syntax}, we refer the reader to \cite{Crucitti2006, HH, JC, Porta2006}. Throughout, we assume $(V,E)$ is connected. For each $i=1,\ldots,n$, let $x_i$ denote the \emph{centrality} of node $V_i$, and $\mathbf{x} = [x_1,\ldots,x_n]^T$ the \textbf{vector of centralities}.

Note that an urban network can also represent enclosed spaces such as a museum. We use such closed examples in this paper because they are, in some sense, more tractable and provide a controlled setting for testing our models.

In this context, \emph{centrality} quantifies the relative importance of nodes within a network, influencing both movement patterns and the spatial experience of the city. Classical indices—such as degree, closeness, betweenness, and eigenvector centrality—capture different aspects of connectivity, accessibility, and bridging roles \cite{bonacich1972factoring, bonacich1987power, Katz1953}. Centrality analysis has proven invaluable in urban research, illuminating mobility, land-use distribution, and the historical evolution of urban form \cite{Crucitti2006, Porta2006, JC}.

A particularly intriguing aspect of centrality is its recursive nature: \emph{a node is central not only because of its immediate connections, but also because it connects to other central nodes}. This \emph{feedback mechanism} naturally leads to eigenvector-based formulations \cite{agryzkov2019centrality,bonacich1987power,Katz1953,Nourian2016} grounded in nonnegative matrix theory \cite{BP,Chung1997,horn2012matrix,Seneta2006,varga2009matrix}.

For the nodes of a given urban network $(V, E)$, two fundamental metrics can be defined: \emph{occupancy} and \emph{total footfall}. \textbf{Occupancy} refers to the instantaneous number of individuals present in a given node at a specific moment, reflecting how populated that space is at that time. In contrast, \textbf{total footfall} captures the cumulative number of people who enter or traverse that node over a defined time interval, thus representing the overall volume of movement through the space. Although related, these measures describe distinct aspects of spatial dynamics. For instance, a transit hall or passageway may exhibit very high total footfall due to continuous pedestrian flows, while maintaining relatively low occupancy levels because individuals typically spend only a short time there.

In an idealized scenario, centrality scores would mirror both the occupancy and total footfall of spaces, with the most central streets or intersections corresponding to the most heavily trafficked locations. Of course, such a perfect correspondence is unlikely in real-world settings, where additional factors such as behavioral patterns, temporal variations, and local context intervene. Nevertheless, the degree of correlation between centrality and occupancy provides a natural criterion for estimating empirical centrality values for each node. In this paper, however, we focus solely on occupancy, as it is easier to measure in small, toy examples.

\subsection*{Outline of the paper}

In the following section, we present a toy example to illustrate the concepts and methods appearing in this work. After this, the paper is structured around three main parts:

\begin{itemize}

\item[(1)] \textbf{Model formulation (Section \ref{section.eigen}).}  We present a family of models for centrality in urban networks. These models are widely used in several contexts in applied mathematics and we think that they are also applicable to urban networks. We assume that the centrality of a given node in some urban network depends on both topological and non-topological factors. As we will discuss later, the non-topological factors concern the \emph{intrinsic attraction} of the corresponding space (presence of points of interest, aesthetic qualities and other contextual factors)  and/or the presence of \emph{must-visit points of interest} (such as workplaces). The \emph{feedback mechanism} mentioned before, naturally leads to some fixed point models (\emph{eigenvector models} and \emph{shifted eigenvector models}).

\item[(2)] \textbf{Parameter estimation and validation (Section \ref{section.flow}).}  If we assume that the measure of centrality of a given node of some urban network should mirror the occupancy of the corresponding space, then we can easily estimate the parameters of any of the models discussed in Section \ref{section.eigen} from real data using  least-squares methods. Moreover, experimental data allow us to estimate the \emph{intrinsic attraction} of a given space and the occupancy due to \emph{must-visit points of interest}, which is something conceptually difficult. We present some strategies for fitting the parameters and validating the models.

\item[(3)] \textbf{Sensitivity analysis (Section \ref{section.sensitivity}).} Once we have selected a model, we have fixed its parameters and we have checked that this fitting is adequate, we can use our approach to study the impact of some urban intervention at a given node of the type \emph{adding a must-visit point of interest} or \emph{improving its intrinsic attraction}). The sensitivity analysis of our models is, then, very relevant in urban planning.  We present some formulas to perform this sensitivity analysis (Theorems \ref{theo.sensitivity}, \ref{theo.2}).

\end{itemize}

\section{A toy example} \label{section.toy}

Consider a university building whose floor plan has the shape of a letter $C$, as shown in Figure \ref{fig.ETSAM}. To simplify, we model it as three spaces represented by nodes $V_1$, $V_2$, and $V_3$, where $V_1$ is connected to $V_2$ and $V_2$ to $V_3$, forming a simple linear graph.

\begin{figure}[h]
\centering
\includegraphics[width=0.4\textwidth]{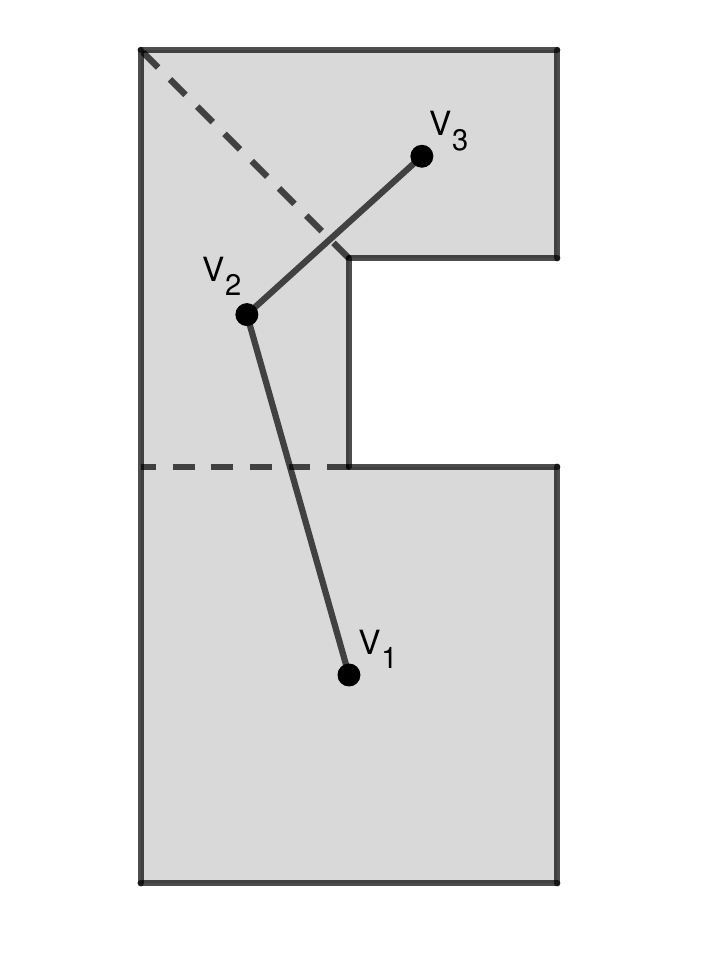}
\caption{Floor plan of the building described in the example.}\label{fig.ETSAM}
\end{figure}

For this graph, the \textbf{adjacency matrix} (obtained writing 1 in the position $(i,j)$ if and only if $V_i,V_j$ are connected and 0 otherwise) is
\begin{equation} \label{eq.adj}
\mathbf A=\begin{bmatrix}0 & 1 & 0 \\ 1 & 0 & 1 \\ 0 & 1 & 0 \end{bmatrix}
\end{equation}

In the following sections, we will describe different features of these three nodes, that are related to  their \emph{intrinsic attraction} and possible \emph{must-visit points of interest}, when necessary.


\section{Shifted eigenvector models for centrality} \label{section.eigen}

In the following, for any square matrix $\mathbf M$, let us denote by $\rho(\mathbf M)$ to its \textbf{spectral radius}, that is, the maximum absolut value of the eigenvalues of $\mathbf M$.

\subsection*{Spectral centrality and Perron-Frobenius Theorem}

Perhaps the most illustrative example of a centrality measure in which the \emph{feedback mechanism} described in the introduction leads to an eigenproblem is \emph{spectral centrality}, a classical concept in network analysis \cite{bonacich1972factoring, Newman2010}. Intuitively, it captures how a street or intersection gains importance not only from its immediate connections, but also from the importance of the locations to which it is linked. Formally:

\medskip

\noindent\textbf{Spectral centrality.} \emph{Let $\mathbf A$ denote the \textbf{adjacency matrix} of an urban network $(V,E)$. The \textbf{vector of spectral centralities} is defined as the (unique up to scaling) positive eigenvector of $\mathbf A$, that is, the (unique up to scaling) positive vector satisfying}
\begin{equation}\label{eq.easy}
\mathbf A \mathbf x = \lambda \mathbf x,\qquad \text{for some positive real }\lambda.
\end{equation}

\medskip

Let us remark that this measure of centrality depends only on topological aspects (connectivity) and does not take into acount any other feature of the spaces corresponding to the nodes. Also, it is not an absolut, but a relative measure, only valid to compare the nodes.

In the toy example described in Section \ref{section.toy}, the unique (up to scaling) positive eigenvector is $\mathbf x=[1,\sqrt{2},1]^T$. It corresponds to the eigenvalue $\lambda=\sqrt{2}$.

Since the urban network is connected, then the adjacency matrix $\mathbf A$ is irreducible. So, Eigenproblem \eqref{eq.easy} is guaranteed to have a unique (up to scaling) positive solution $\mathbf x$ by the classical result known as \emph{Perron-Frobenius Theorem} \cite{Seneta2006, ersel2011linear}. Since this result is recalled several times in this paper, let us include one of its possible statements (it is a particularization of  Theorem 10.12 in \cite{ersel2011linear}, so we refer the reader to that article and the references therein). 

\begin{theorem}[Perron--Frobenius Theorem] \label{theo.PF}
Let $\mathbf M$ be a nonnegative, irreducible matrix.

\begin{itemize}

\item There exists a real positive eigenvalue $\lambda$ for $\mathbf M$, called \textbf{Perron-Frobenius eigenvalue}.

  \item The Perron-Frobenius eigenvalue equals the $\rho(\mathbf M)$ and it is \textbf{algebraically simple} (its algebraic multiplicity is $1$).

  \item The eigenvector $\mathbf x$ corresponding to the Perron-Frobenius eigenvalue is unique up to scaling and it can be chosen strictly positive. We call any positive eigenvector corresponding to the Perron-Frobenius eigenvalue a \textbf{Perron-Frobenius eigenvector}.

  \item Perron-Frobenius eigenvectors are the unique positive eigenvectors of $\mathbf M$ (that is, if $\mathbf y$ is a positive vector and satisfies $\mathbf M\mathbf y=\mu \mathbf y$, then $\mu=\lambda$ and $\mathbf y$ is a Perron-Frobenius eigenvector).

\end{itemize}
\end{theorem}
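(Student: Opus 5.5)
The standard route goes through the strictly positive matrix $\mathbf B=(\mathbf I+\mathbf M)^{n-1}$ and Brouwer's fixed point theorem. First, irreducibility of the $n\times n$ nonnegative matrix $\mathbf M$ gives $\mathbf B>0$ entrywise. Indeed, the $(i,j)$ entry of $\mathbf B$ dominates a positive multiple of $\sum_{k\le n-1}(\mathbf M^k)_{ij}$, and irreducibility means the associated digraph is strongly connected, so some path of length at most $n-1$ joins $i$ to $j$.

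Next, for existence, I would consider the simplex $\Delta=\{\mathbf x\ge 0:\sum x_i=1\}$ and the map $\mathbf x\mapsto \mathbf B\mathbf x/\|\mathbf B\mathbf x\|_1$. This map is well defined and continuous because $\mathbf B\mathbf x>0$ for every $\mathbf x\in\Delta$. Brouwer gives a fixed point $\mathbf x$, which is strictly positive because it equals $\mathbf B\mathbf x$ up to a positive factor. A cleaner alternative avoids Brouwer and applies it only once: set $\lambda=\sup\{t:\ \exists\,\mathbf x\in\Delta,\ \mathbf M\mathbf x\ge t\mathbf x\}$, the Collatz--Wielandt value. Compactness of $\Delta$ yields a maximizer $\mathbf x$. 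If $\mathbf M\mathbf x-\lambda\mathbf x\ge 0$ were nonzero, multiplying by $\mathbf B$ (which commutes with $\mathbf M$) would give $\mathbf M\mathbf y-\lambda\mathbf y>0$ with $\mathbf y=\mathbf B\mathbf x>0$, and this contradicts maximality. So $\mathbf M\mathbf x=\lambda\mathbf x$, with $\mathbf x=\mathbf B\mathbf x/(1+\lambda)^{n-1}>0$. Positivity of $\lambda$ follows because an irreducible matrix with $n\ge2$ has no zero row, so $\mathbf M\mathbf x\ne 0$. In the $1\times1$ case one needs $\mathbf M\neq 0$; I would note this as a convention in the statement.

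Then I would identify $\lambda$ with $\rho(\mathbf M)$ and prove the uniqueness claims. For any eigenpair $\mathbf M\mathbf z=\mu\mathbf z$, taking absolute values gives $\mathbf M|\mathbf z|\ge|\mu||\mathbf z|$, so $|\mu|\le\lambda$ by the Collatz--Wielandt definition; hence $\lambda=\rho(\mathbf M)$. For the last bullet, I would use a positive left eigenvector $\mathbf w^T\mathbf M=\lambda\mathbf w^T$, obtained by applying the same construction to $\mathbf M^T$, which is also irreducible. If $\mathbf y>0$ and $\mathbf M\mathbf y=\mu\mathbf y$, then $\lambda\mathbf w^T\mathbf y=\mu\mathbf w^T\mathbf y$ with $\mathbf w^T\mathbf y>0$, so $\mu=\lambda$. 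Geometric simplicity comes from the following argument. Take any real eigenvector $\mathbf z$ for $\lambda$; complex ones split into real and imaginary parts. Choose $t$ so that $\mathbf x-t\mathbf z\ge 0$ has a zero entry. This vector satisfies $\mathbf B(\mathbf x-t\mathbf z)=(1+\lambda)^{n-1}(\mathbf x-t\mathbf z)$. If it were nonzero, the left side would be strictly positive, contradicting the zero entry; hence $\mathbf z$ is a multiple of $\mathbf x$. The same argument shows that the Perron--Frobenius eigenvector from the fourth bullet is unique up to scaling.

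I expect algebraic simplicity to be the main obstacle. My plan is to rule out a Jordan chain. Suppose $(\mathbf M-\lambda\mathbf I)\mathbf u=\mathbf x$. Multiplying on the left by $\mathbf w^T$ gives $0=\mathbf w^T\mathbf x>0$, a contradiction. Thus the eigenvalue $\lambda$ has no Jordan block of size larger than one, and combined with geometric simplicity its algebraic multiplicity is $1$. Everything else is standard compactness and positivity bookkeeping, and since the paper states the theorem as a particularization of Theorem 10.12 in \cite{ersel2011linear}, citing that source would also suffice.
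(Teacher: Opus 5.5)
The paper does not prove this statement: it presents it as a particularization of Theorem 10.12 in \cite{ersel2011linear} and refers the reader there. You instead give a complete, self-contained proof along the classical Wielandt/Collatz--Wielandt lines, and the argument you actually carry through is correct. Its steps are positivity of $(\mathbf I+\mathbf M)^{n-1}$, attainment of the Collatz--Wielandt value, the absolute-value comparison giving $\lambda=\rho(\mathbf M)$, the ``subtract until an entry vanishes'' argument for geometric simplicity, and the positive left eigenvector ruling out a Jordan chain. There are two small points of rigor. First, $\mathbf x\mapsto\max\{t:\mathbf M\mathbf x\ge t\mathbf x\}$ is only upper semicontinuous on $\Delta$, since it can drop near the boundary. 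So ``compactness yields a maximizer'' should be read as maximizing $t$ over the closed bounded set $\{(t,\mathbf x):\mathbf x\in\Delta,\ t\ge 0,\ \mathbf M\mathbf x\ge t\mathbf x\}$. Second, in the Brouwer sketch you set aside, a fixed point of $\mathbf x\mapsto\mathbf B\mathbf x/\|\mathbf B\mathbf x\|_1$ is only an eigenvector of $\mathbf B$, so that route should use $\mathbf x\mapsto(\mathbf I+\mathbf M)\mathbf x/\|(\mathbf I+\mathbf M)\mathbf x\|_1$ instead. As for what each approach buys, the citation keeps the paper short. Your proof makes visible that irreducibility enters essentially only through $(\mathbf I+\mathbf M)^{n-1}>0$. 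It also correctly flags the convention-dependent $1\times 1$ case $\mathbf M=[0]$, where the first bullet fails; this is harmless for connected networks with $n\ge 2$. Finally, it constructs the positive left eigenvector $\mathbf w$ with $\mathbf w^T\mathbf x>0$. That is precisely the fact the paper later uses without proof when it normalizes left and right Perron--Frobenius eigenvectors by $\mathbf u\mathbf x=1$ in Theorem~\ref{theo.sensitivity}.
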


Not only the existence and uniqueness (up to scaling) of solutions is guaranteed, but several exact and numerical methods are known to find such an eigenvector. For a detailed discussion, see \cite{strang2009introduction,Golub2013}.

\subsection*{Variations of spectral centrality: eigenvector models for centrality}

Many interesting variations of this model have been proposed to address its limitations \cite{bonacich1972factoring, agryzkov2019centrality,Newman2010, erlander1990gravity}. Some notable examples include the considerations in the following list:

\begin{itemize}
\item The classical model considers only the immediate neighbors of each node. It is possible to account for nodes at greater distances, assigning less importance to those further away. This can be achieved by replacing $\mathbf A$ in Equation \eqref{eq.easy} with any of the (hollow, nonnegative, and symmetric) matrices
\[
\mathbf E = 
\begin{bmatrix} 
0 & \frac{1}{d_{12}} & \ldots & \frac{1}{d_{1n}} \\ 
\frac{1}{d_{21}} & 0 & \ddots & \vdots  \\ 
\vdots & \ddots & \ddots & \frac{1}{d_{n-1,n}} \\ 
\frac{1}{d_{n1}} & \ldots  & \frac{1}{d_{n,n-1}} & 0
\end{bmatrix}, \qquad
\widetilde{\mathbf E} = 
\begin{bmatrix} 
0 & \frac{1}{d_{12}^2} & \ldots & \frac{1}{d_{1n}^2} \\ 
\frac{1}{d_{21}^2} & 0 & \ddots & \vdots  \\ 
\vdots & \ddots & \ddots & \frac{1}{d_{n-1,n}^2} \\ 
\frac{1}{d_{n1}^2} & \ldots  & \frac{1}{d_{n,n-1}^2} & 0
\end{bmatrix}.
\]
Here, the positive values $d_{ij}$ may represent (i) metric distances between street centroids, (ii) travel times, or (iii) shortest-path lengths on the graph, depending on the modeling context. These formulations relate to well-known concepts such as \emph{weighted harmonic centrality} or \emph{gravity models} \cite{Newman2010, erlander1990gravity}.

\item As noted by several authors, it is possible to \emph{weight} the nodes to account for their relative importance or \emph{intrinsic attraction} \cite{agryzkov2019centrality}.

Continuing with the toy example from Section \ref{section.toy}, suppose that in the space represented by node $V_1$, students and faculty have access to bathrooms, vending machines, a water fountain for filling bottles, and some desks for studying or working; while in the space represented by node $V_3$, a second pair of bathrooms is available. The different characteristics of the spaces represented by $V_1$, $V_2$, and $V_3$ lead to different weights, $w_1$, $w_2$, and $w_3$, reflecting these variations. This extension of the toy example will be revisited in Section \ref{section.flow}.

To construct a model that accounts for this type of phenomenon, we can replace $\mathbf A$ in Equation \eqref{eq.easy} with $\mathbf A \mathbf W$ (or with $\mathbf E \mathbf W$ or $\widetilde{\mathbf E} \mathbf W$, if combining with the previous approach), where $\mathbf W$ is a diagonal matrix with entries $w_1, \ldots, w_n$ corresponding to the weight of each node $V_i$.
\end{itemize}

In any of the variations proposed above, we ultimately need to solve an eigenproblem of the form:

\medskip
\noindent \textbf{Eigenvector models for centrality.} \emph{Let $\mathbf M$ be a nonnegative irreducible matrix suitably associated to an urban network $(V,E)$. We define the \textbf{eigenvector of centralities associated to $\mathbf M$} as the (unique up to scaling) positive eigenvector of $\mathbf M$, that is, the (unique up to scaling) positive vector satisfying}
\begin{equation}\label{eq.general}
\mathbf M \mathbf x = \lambda \mathbf x\qquad \text{for some positive real }\lambda.
\end{equation}

\noindent \emph{ In most of the cases, we \emph{normalize}, dividing the matrix by the (unique) dominant eigenvalue $\lambda$ (see below), and we consider, instead, models of the type}
\begin{equation}\label{eq.generalnorm}
\mathbf M \mathbf x = \mathbf x,
\end{equation}

\noindent and impose that $\rho(\mathbf M)=1$.

\medskip

The matrix $\mathbf M$ is sometimes referred in the literature concerning these kind of models as the \textbf{matrix of relationships} \cite{bonacich1972factoring}.

Let us remark that, if the urban network is connected and  for $\mathbf W$ being a diagonal matrix with positive weights in its main diagonal, the matrices $\mathbf A,\mathbf E,\widetilde{\mathbf E},(\mathbf A\mathbf W),(\mathbf E\mathbf W),(\widetilde{\mathbf E}\mathbf W)$ that we have considered above are irreducible and so Perron-Frobenius Theorem applies.

\subsection*{Mandatory visit to the nodes: shifted eigenvector models for centrality}

In the context of our toy example from Section \ref{section.toy}, let us suppose that some classes take place in the spaces represented by the nodes $V_2$ and $V_3$. In this case, the previous \emph{eigenvector models for centrality} are not adequate: we need to incorporate into the model (for measuring centrality or occupancy) the fact that students and faculty \emph{must visit} the classrooms.

\medskip

\noindent \textbf{Shifted eigenvector models.} \emph{Let $\mathbf M$ be an irreducible, nonnegative matrix with $\rho(\mathbf M)<1$ and $\mathbf f$ a nonnegative column vector. We define the \textbf{centrality eigenvector corresponding to $\mathbf M, \mathbf f$} to be the (unique) vector $\mathbf x$ satisfying}
\begin{equation}\label{eq.eigshift}
\mathbf x = \mathbf M \mathbf x + \mathbf f.
\end{equation}

\medskip

We refer to Problems as the one in Equation \eqref{eq.eigshift} as \textbf{shifted eigenproblems}, interpreting that the vector $\mathbf f$ is a \emph{shifting}.

In this case, centrality is an absolute measure. Assuming that $\mathbf x$ represents not only the centralities but the occupancies of the corresponding spaces, then we can suppose that $\mathbf f$ measures the occupancy of the node, due to the presence of must-visit points of interest, that is, the \textbf{forced occupancy}. Then, only the \emph{free-to-move people} relocate according to the connection between spaces and their \emph{intrinsic attraction}. So, in the previous model, as it has been stated, $\mathbf M$ needs to be normalized so that $\mathbf 1\mathbf M+\mathbf 1\mathbf f$ equals the total occupancy of the urban network, where $\mathbf 1=[1,\ldots, 1]$. This idea is reviewed in Section \ref{section.sensitivity}.

These models  resemble some already appearing in other contexts \cite{BP,ersel2011linear, Newman2010}. The mathematical treatment of Problem \eqref{eq.eigshift} is well known in the economic literature related to \emph{Leontief's Input-Output Model}. The mathematical basis for them is the following result:

\begin{theorem}[Adapted from Lemma 2 in \cite{BP}] \label{theo.leontief} 
Let $\mathbf M$ be an irreducible, square, nonnegative matrix and $\mathbf f$ a nonnegative vector.
\begin{itemize}
\item[(a)] If the system $(\mathbf I-\mathbf M)\mathbf x=\mathbf f$ has a nonnegative and nontrivial solution $\mathbf x$, then $\mathbf x$ is positive $\rho(\mathbf M)\leq 1$.
\item[(b)] If $\rho(\mathbf M)=1$, then $(\mathbf I-\mathbf M)\mathbf x=\mathbf f$ has a nonnegative solution $\mathbf x$ if and only if $\mathbf f=\mathbf 0$ (the case of Theorem \ref{theo.PF}). 
\item[(c)] If $\rho(\mathbf M)<1$, then $(\mathbf I-\mathbf M)$ is invertible and its inverse is nonnegative. So, $(\mathbf I-\mathbf M)\mathbf x=\mathbf f$ has a unique nonnegative solution $\mathbf x$ for every $\mathbf f$. This solution is nontrivial if and only if $\mathbf f\neq \mathbf 0$, and in this case, it is positive.
\end{itemize}
\end{theorem}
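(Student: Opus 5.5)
The whole argument rests on one tool: the left Perron--Frobenius eigenvector of $\mathbf M$. Since $\mathbf M$ is nonnegative and irreducible, so is $\mathbf M^T$. Applying Theorem \ref{theo.PF} to $\mathbf M^T$ gives a strictly positive row vector $\mathbf y$ with $\mathbf y\mathbf M=\rho(\mathbf M)\,\mathbf y$. I would prove the three items in the order (a), (c), (b), because (b) reuses the pairing from (a).

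\emph{Part (a).} Let $\mathbf x\geq \mathbf 0$, $\mathbf x\neq\mathbf 0$, satisfy $\mathbf x=\mathbf M\mathbf x+\mathbf f$.

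For the bound on the spectral radius, multiply on the left by $\mathbf y$ to get $\mathbf y\mathbf x=\rho(\mathbf M)\,\mathbf y\mathbf x+\mathbf y\mathbf f$. Since $\mathbf y>\mathbf 0$ and $\mathbf x$ is nonnegative and nontrivial, $\mathbf y\mathbf x>0$. Also $\mathbf y\mathbf f\geq 0$. Hence $(1-\rho(\mathbf M))\,\mathbf y\mathbf x\geq 0$, so $\rho(\mathbf M)\le 1$.

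For positivity, argue by contradiction and let $Z=\{i: x_i=0\}$ be nonempty. For each $i\in Z$ we have $0=\sum_j m_{ij}x_j+f_i$, and every term is nonnegative. So $m_{ij}=0$ whenever $i\in Z$ and $j\notin Z$. The complement of $Z$ is nonempty because $\mathbf x\neq\mathbf 0$. Thus no edge of the digraph of $\mathbf M$ leaves $Z$, which contradicts irreducibility. Therefore $\mathbf x>\mathbf 0$.

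\emph{Part (c).} If $\rho(\mathbf M)<1$, the Neumann series $\sum_{k\ge0}\mathbf M^k$ converges. This follows from Gelfand's formula, or from $\|\mathbf M^k\|\to 0$ for some norm. Its sum is $(\mathbf I-\mathbf M)^{-1}$, and every partial sum is nonnegative, so the inverse is nonnegative.

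It follows that $\mathbf x=(\mathbf I-\mathbf M)^{-1}\mathbf f$ is the unique solution for every $\mathbf f$, and it is nonnegative whenever $\mathbf f\geq\mathbf 0$. It is trivial exactly when $\mathbf f=\mathbf 0$. When $\mathbf f\neq\mathbf 0$ the solution is nonnegative and nontrivial, so part (a) makes it positive. One can also see positivity directly: irreducibility gives $\sum_{k<n}\mathbf M^k>0$ entrywise.

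\emph{Part (b).} Let $\rho(\mathbf M)=1$ and let $\mathbf x\ge\mathbf 0$ be a solution. The same pairing gives $\mathbf y\mathbf x=\mathbf y\mathbf x+\mathbf y\mathbf f$, so $\mathbf y\mathbf f=0$. Since $\mathbf y>\mathbf 0$ and $\mathbf f\ge\mathbf 0$, this forces $\mathbf f=\mathbf 0$. Conversely, if $\mathbf f=\mathbf 0$, then $\mathbf x=\mathbf 0$ is a nonnegative solution, and so is any Perron--Frobenius eigenvector from Theorem \ref{theo.PF}.

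\emph{Main obstacle.} The only step that needs care is justifying the convergence of the Neumann series in (c). I would cite the standard fact that $\mathbf M^k\to\mathbf 0$ whenever $\rho(\mathbf M)<1$, for instance via the Jordan form. I would not try to extract this from Theorem \ref{theo.PF}, which does not give it.
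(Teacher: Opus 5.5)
\textbf{Verdict.} Your proof is correct. The paper gives no proof of this statement: it imports it as an adaptation of Lemma~2 in \cite{BP}. So your argument is not a reconstruction of the paper's proof but a self-contained alternative.

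\textbf{What your route buys.} It uses nothing beyond Theorem~\ref{theo.PF} as stated in the paper (applied to $\mathbf M^T$), an elementary zero-pattern argument, and the Neumann series. The pairing with the positive left eigenvector $\mathbf y$ is essentially the standard proof of the subinvariance theorem for irreducible nonnegative matrices. That theorem says: if $\mathbf M\mathbf x\le\mathbf x$ with $\mathbf x\ge\mathbf 0$, $\mathbf x\neq\mathbf 0$, then $\mathbf x>\mathbf 0$ and $\rho(\mathbf M)\le 1$, with $\rho(\mathbf M)=1$ iff $\mathbf M\mathbf x=\mathbf x$. Parts (a) and (b) are exactly this, since $\mathbf M\mathbf x=\mathbf x-\mathbf f\le\mathbf x$.

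Your argument also gives slightly more than (b) states. When $\rho(\mathbf M)=1$, the identity $\mathbf y\mathbf f=0$ holds for every solution, nonnegative or not. Hence for $\mathbf f\ge\mathbf 0$, $\mathbf f\neq\mathbf 0$, the system has no solution at all.

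\textbf{Optional refinement of (c).} The step you flag as needing an outside fact can be done with the same pairing. If $\rho(\mathbf M)<1$, then $1$ is not an eigenvalue, so $\mathbf I-\mathbf M$ is invertible. Take $\mathbf f\ge\mathbf 0$, let $\mathbf x$ be the solution, and let $\mathbf x^-=\max(-\mathbf x,\mathbf 0)$ be its entrywise negative part. Then $\mathbf x=\mathbf M\mathbf x+\mathbf f\ge-\mathbf M\mathbf x^-$, which gives $\mathbf x^-\le\mathbf M\mathbf x^-$. Pairing with $\mathbf y$ yields $\mathbf y\mathbf x^-\le\rho(\mathbf M)\,\mathbf y\mathbf x^-$, which forces $\mathbf x^-=\mathbf 0$. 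Taking $\mathbf f=\mathbf e_j$ then shows $(\mathbf I-\mathbf M)^{-1}\ge\mathbf 0$ without any convergence argument. Your proof then rests entirely on Theorem~\ref{theo.PF}.

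\textbf{What the citation route buys.} Brevity, and a direct link to the Leontief input--output literature the paper draws on. The cost is that the reader must consult an external source for what is, as your argument shows, a short verification.
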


\section{Estimation of the parameters and validation of the models}\label{section.flow}

Let $\mathbf B$ be any of the matrices $\mathbf A,\mathbf E,\widetilde{\mathbf E}$ described in the previous section.  For those models corresponding to a matrix of relationships of the type $\mathbf M=\mathbf B\mathbf W$, where $\mathbf W$ is a diagonal matrix encoding the weights describing the \emph{intrinsic attraction} of the nodes, we have a fundamental and conceptual problem: how do we estimate the entries in $\mathbf W$?

As explained in the introduction, the ideal correlation between centrality and occupancy provides a natural criterion to answer empirically the previous question from real data.


For the rest of this section and for a fixed urban network $(V,E)$, let $\mathbf x_1,\mathbf x_2,\ldots,\mathbf x_N$ be some real data.  For each $i=1,\ldots, n$, $\mathbf x_i$ encodes the real occupancy of the nodes at a given moment. Let us also denote by $\mathbf X_i$ to the diagonal matrix corresponding to $\mathbf x_i$. Collecting this data may be challenging for big urban networks and some authors propose the use of mobile device data \cite{niu2023technical}.

Let us consider shifted eigenvector models, corresponding to those matrices of relationships discussed at the beginning of this section,  of the type
\begin{equation} \label{eq.2e} 
\mathbf x=\underbrace{\mathbf B \mathbf W}_{\mathbf M} \mathbf x +\mathbf f.\end{equation}

\noindent It is possible to estimate the node weights $\mathbf{W}$ from the observed data $\mathbf{x}_1, \ldots, \mathbf{x}_N$. The following subsections describe the procedure, distinguishing between the case where the forced occupancy vector $\mathbf{f}$ is known and the case where it is not. In any case, these are \emph{ least-squares fitting problems}. For general methods to solve this type problems, see \cite{strang2009introduction,strang2016la4e,searle1997linear,bjorck1996numerical}. Additionally, the models can be validated by assessing whether the fit is reasonable. Details are omitted here and we refer the reader to previous works for further information.

The methods in this section are based on the fact that the emph{intrinsic atraction} of the nodes is constant.

\subsection*{Estimating the weights if the vectors of forced occupancy are known}

In this subsection we want to estimate the weights in Equation \eqref{eq.2e}. We suppose that, for each vector of real data $\mathbf x_1,\ldots, \mathbf x_N$, we have some vector $\mathbf d_i $ encoding the occupancy of the nodes, due to forced occupancy. Note the vectors $\mathbf d_1,\ldots, \mathbf d_N$ may not be equal, that is, the forced occupancy is not suppose to be constant.

We consider
\begin{equation}\label{sys.over}
\begin{bmatrix}
\mathbf B \mathbf X_1 \\ \hline
\vdots \\ \hline
\mathbf B \mathbf X_N
\end{bmatrix} 
\mathbf w =
\begin{bmatrix}
\mathbf x_1-\mathbf f_1 \\ \hline
\vdots \\ \hline
\mathbf x_N-\mathbf f_N
\end{bmatrix},
\end{equation}

\noindent and then we fit the parameters using the  least-squares method.

The toy example introduced in Section \ref{section.toy} illustrates this framework. Suppose that we choose $\mathbf B=\mathbf A$ (the adjacency matrix in Equation \eqref{eq.adj}). Let us suppose that at the nodes $V_2,V_3$ some classes are taking place. Then some students and faculty count as \emph{forced occupancy} of the nodes.

 Suppose that, as experimental data, we have obtained the vectors
\[
\mathbf x_1 = \begin{bmatrix} 105 \\ 297 \\ 98 \end{bmatrix}, \qquad
\mathbf x_2 = \begin{bmatrix} 99 \\ 303 \\ 98 \end{bmatrix}, \qquad
\mathbf x_3 = \begin{bmatrix} 97 \\ 289 \\ 113 \end{bmatrix},
\]

\noindent and
\[
\mathbf f_1 = \begin{bmatrix} 0 \\ 297 \\ 95 \end{bmatrix}, \qquad
\mathbf f_2 = \begin{bmatrix} 0 \\ 300 \\ 95 \end{bmatrix}, \qquad
\mathbf f_3 = \begin{bmatrix} 0 \\ 289 \\ 95 \end{bmatrix}.
\]

\noindent This is because at nodes $V_2,V_3$ some classes are taking place and we can count the number of students assisting to these classes. To estimate the parameters $w_1, w_2, w_3$ encoding the \emph{intrinsic attraction} of the nodes $V_1,V_2,V_3$, we can use a \emph{ least-squares fitting} for
\[
\begin{bmatrix}
\mathbf A \mathbf X_1 \\ \hline
\mathbf A \mathbf X_2 \\ \hline
\mathbf A \mathbf X_3
\end{bmatrix} 
\mathbf w =
\begin{bmatrix}
\mathbf x_1-\mathbf f_1 \\ \hline
\mathbf x_2 -\mathbf f_2\\ \hline
\mathbf x_3 -\mathbf f_3
\end{bmatrix},
\]
where, for $i=1,2,3$, $\mathbf X_i$ is the diagonal matrix whose diagonal entries are those in $\mathbf x_i$. That is,
\[
\begin{bmatrix}
0 & 297 & 0 \\ 105 & 0 & 98 \\ 0 & 297 & 0 \\ \hline
0 & 303 & 0 \\ 99 & 0 & 98 \\ 0 & 303 & 0 \\ \hline
0 & 289 & 0 \\ 97 & 0 & 113 \\ 0 & 289 & 0
\end{bmatrix} 
\begin{bmatrix} w_1 \\ w_2 \\ w_3 \end{bmatrix} =
\begin{bmatrix} 105 \\ 0 \\ 3 \\ \hline 99 \\ 3 \\ 3 \\ \hline 97 \\ 0 \\ 8 \end{bmatrix}.
\]

\noindent In this case, we obtain approximately
\[
w_1 = 2, \quad w_2 = 1/3, \quad w_3 = 1.
\]

\subsection*{Estimating the weights if the vector of forced occupancy $\mathbf f$ is not known}

In this case, we want to estimate, not only $\mathbf W$, but $\mathbf f$. To do so, we need to consider that $\mathbf f$ is constant for the real data vectors $\mathbf x_1,\ldots, \mathbf x_N$. This may be the case in some real and larger urban networks, then we need to slightly modify the overdetermined System \ref{sys.over}:
\[
\left[\begin{array}{c | c}
\mathbf B \mathbf X_1  &\mathbf I\\ \hline
\vdots & \vdots \\ \hline
\mathbf B \mathbf X_N & \mathbf I
\end{array} \right]
\begin{bmatrix} \mathbf w\\ \hline \mathbf f\end{bmatrix} =
\begin{bmatrix}
\mathbf x_1\\ \hline
\vdots \\ \hline
\mathbf x_N
\end{bmatrix},
\]

\noindent where $\mathbf I$ is the identity matrix of the adequate size. Again, we can use a \emph{ least-squares fitting} to estimate the parameters and to discuss the model.

\section{Sensitivity analysis and its importance in urban intervention planning} \label{section.sensitivity}

\subsection*{Introduction to sensitivity analysis}
\emph{Sensitivity analysis} studies how variations in a model's input parameters affect its outputs. By systematically perturbing inputs and observing the resulting changes, one can identify which parameters most strongly influence outcomes. This provides insight into model robustness, highlights sources of uncertainty, and guides which aspects require more precise measurement or careful calibration.

A common approach to sensitivity analysis uses partial derivatives of the model outputs with respect to input parameters. The information can be expressed as \emph{elasticities}, obtained by normalizing the derivatives by the ratio of output to input, which allows comparison across parameters with different units or scales. If we want to measure how a variable $y$ varies with respect to some parameter $t$, the \textbf{elasticity} is defined as
\begin{equation}\label{eq.defielas}E_{y,t} = \frac{\partial y}{\partial t} \cdot \frac{t}{y}.\end{equation}

In the context of urban network centrality and occupancy models, sensitivity analysis can reveal how changes in node weights, distance measures, or connectivity affect predicted centrality scores and, consequently, predicted occupancy of the spaces (if we suppose that they are correlated).

Consider planning an urban intervention at a given node, of the type \emph{improving its intrinsic attraction} of \emph{adding a must-visit point of interest}. In the models proposed in Section \ref{section.eigen}, this corresponds to changing a single parameter, such as a node's weight or a forced occupancy coefficient. Elasticities provide a quantitative basis for choosing intervention locations. Locations with high elasticity values are likely to yield the greatest impact on the network, while nodes with low elasticity values would have minimal effect.


Let us suppose that $\mathbf x$ is the vector of centralities corresponding to any of the models described above. To perform the sensitivity analysis of $\mathbf x$ with respect to any of the parameters in the model, \emph{we assume that the sum of the elements in $\mathbf x$ remains constant as this parameter varies.} As a consequence, if $\mathbf x'$ denotes the partial derivative of $\mathbf x$ with respect to this selected parameter, the sum of the entries in $\mathbf x'$ is 0.  In other words,
\begin{equation}\label{eq.ones}\mathbf 1\mathbf x\text{ is constant and }\mathbf 1\mathbf x'=0,\qquad \text{for }\mathbf 1=[1,\ldots, 1].\end{equation}

\subsection*{Sensitivity analysis for eigenvector models with weights}

Let us first consider those (unshifted) eigenvector models for centrality introduced in Section \ref{section.eigen}, having weights. That is, we consider an eigenproblem of the type
\begin{equation} \label{eq.new} \mathbf x=\mathbf M\mathbf x,\qquad \mathbf M=\mathbf B\mathbf W\end{equation}

\noindent  where $\mathbf W$ is a diagonal matrix whose diagonal entries are the (positive) weights $w_1,\ldots, w_n$, and \(\mathbf B\) is a symmetric, irreducible, and nonnegative matrix (this includes the models in Equation \eqref{eq.generalnorm}, for $\mathbf M=\mathbf A\mathbf W, \mathbf E\mathbf W, \widetilde{\mathbf E}\mathbf W$).

We now consider that one of the weights (elements in the diagonal of $\mathbf W$), $w_i$, is a variable, so $\mathbf x$ depends on $w_i$ and we want to compute the derivatives of the entries in \(\mathbf x\) with respect to one of this weight \(w_i\). There is a:

\medskip
\noindent \textbf{Normalization issue.} \emph{When we modify the weights \(w_i\), the matrix \(\mathbf M=\mathbf B\mathbf W\) in Equation \eqref{eq.new} varies accordingly so that the Perron--Frobenius eigenvalue is always equal to $1$. In other words, we consider that the model is}
\begin{equation} \label{eq.normissue} \mathbf x=\mathbf M\mathbf x,\qquad \mathbf M=\frac{1}{\lambda}\mathbf B\mathbf W,\end{equation}

\noindent  \emph{where $\lambda$ is the Perron-Frobenius eigenvalue of $(\mathbf B\mathbf W)$ (and so depends on $w_i$), only one of the diagonal elements in the diagonal matrix $\mathbf W$ depends on $w_i$ and $\mathbf B$ is constant.}

\medskip

Before the following result, let us recall that, up to this moment, we have identified Perron-Frobenius eigenvector with \emph{right Perron-Frobenius eigenvectors}. There is an analogue theory for \emph{left Perron-Frobenius eigenvectors}. We omit details. We have that:

\begin{theorem}\label{theo.sensitivity}
Let \(\mathbf B\) be a symmetric, irreducible, and nonnegative constant matrix. Let \(\mathbf W\) be a diagonal matrix with diagonal entries \(w_1,\ldots,w_n\) and suppose that they are variables. Let us consider Eigenproblem \eqref{eq.normissue}. 

\begin{itemize}

\item[(a)] For $w_i\in (0,+\infty)$ and assuming that the rest of the weights are constant,  the Perron-Frobenius eigenvalue $\lambda$ of $(\mathbf B\mathbf W)$ is continuous with respect to $w_i$ and the partial derivative $\lambda'$ of $\lambda$ with respect to $w_i$ is well defined.

\item[(b)] For $w_i\in (0,+\infty)$, for every positive real number $N$, the vector $\mathbf x$ satisfying Equation \eqref{eq.normissue} and $\mathbf 1\mathbf x=N$ is continuous with respect to $w_i$ and the partial derivative $\mathbf x'$ of $\mathbf x$ with respect to $w_i$ is well defined.

\item[(c)] Then, it satisfies
\begin{equation} \label{eq.sensi}
(\mathbf I - \mathbf M)\, \mathbf x' = \mathbf M' \mathbf x,
\end{equation}
with
\begin{equation} \label{eq.Mprima}
\mathbf M' \;=\; \frac{1}{\lambda}\mathbf B\mathbf E_{ii}\;-\;\frac{\lambda'}{\lambda}\,\mathbf M,
\end{equation}
\begin{equation}\label{eq.lambdaprima}
\lambda' \;=\; x_i\,\mathbf u \mathbf B \mathbf e_i,
\end{equation}

where \(\mathbf E_{ii}\) is the matrix with a single $1$ in position \((i,i)\) and zeros elsewhere,  \(\mathbf x\) and \(\mathbf u\) are the right (column) and left (row) Perron-Frobenius eigenvectors of \(\mathbf B\mathbf W\), normalized so that \(\mathbf u \mathbf x=1\), and \(\mathbf e_i\) the \(i\)-th canonical basis vector.

\item[(d)] For every value $w_i\in (0,+\infty)$, there is a unique $\mathbf x$ satisfying Equation \eqref{eq.normissue} and a unique $\mathbf x'$ satisfying, simultaneously, the equations in Linear System \eqref{eq.sensi} and Equation \eqref{eq.ones}.
\end{itemize}
\end{theorem}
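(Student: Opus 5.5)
The plan is to reduce everything to the simplicity of the Perron--Frobenius eigenvalue in Theorem \ref{theo.PF} and then differentiate the eigen-equation. As $w_i$ ranges over $(0,+\infty)$ with the other weights fixed and positive, $\mathbf B\mathbf W$ stays nonnegative and irreducible, because $\mathbf W$ is positive diagonal and does not change the zero pattern of $\mathbf B$. So Theorem \ref{theo.PF} gives an algebraically simple eigenvalue $\lambda(w_i)=\rho(\mathbf B\mathbf W)>0$ with positive right and left eigenvectors.

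For (a), I will use the standard fact that a simple root of the characteristic polynomial $p(\mu,w_i)=\det(\mu\mathbf I-\mathbf B\mathbf W)$ depends smoothly on the coefficients. Since $\partial_\mu p\neq 0$ at a simple root, the implicit function theorem applies, and $p$ is polynomial in $w_i$, so $\lambda$ is $C^\infty$ (indeed analytic).

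For (b), I will consider the map
\[
F(\mathbf x,\mu,w_i)=\bigl((\mathbf B\mathbf W-\mu\mathbf I)\mathbf x,\ \mathbf 1\mathbf x-N\bigr)
\]
and show that its Jacobian in $(\mathbf x,\mu)$ is invertible at the Perron pair. The Jacobian is the bordered matrix with blocks $\mathbf B\mathbf W-\lambda\mathbf I$, $-\mathbf x$, $\mathbf 1$, $0$. Suppose $(\mathbf y,\nu)$ is in its kernel. Multiplying the first block equation on the left by $\mathbf u$ gives $\nu\,\mathbf u\mathbf x=0$, so $\nu=0$ because $\mathbf u\mathbf x>0$. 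Then $\mathbf y\in\ker(\mathbf B\mathbf W-\lambda\mathbf I)=\mathrm{span}(\mathbf x)$, and the condition $\mathbf 1\mathbf y=0$ together with $\mathbf 1\mathbf x=N>0$ forces $\mathbf y=0$. The implicit function theorem then gives smooth $\mathbf x(w_i)$, and the unique positive normalized solution coincides with this branch. The solutions of \eqref{eq.normissue} are exactly the Perron vectors of $\mathbf B\mathbf W$, since $\mathbf M=\frac1\lambda\mathbf B\mathbf W$ has Perron eigenvalue $1$ and the last item of Theorem \ref{theo.PF} applies.

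For (c), I will differentiate $\mathbf x=\mathbf M\mathbf x$ to get $\mathbf x'=\mathbf M'\mathbf x+\mathbf M\mathbf x'$, which is \eqref{eq.sensi}. Since $\mathbf W'=\mathbf E_{ii}$, the quotient rule applied to $\mathbf M=\lambda^{-1}\mathbf B\mathbf W$ gives \eqref{eq.Mprima}. For $\lambda'$, I will differentiate $\mathbf B\mathbf W\mathbf x=\lambda\mathbf x$ to obtain
\[
\mathbf B\mathbf E_{ii}\mathbf x+\mathbf B\mathbf W\mathbf x'=\lambda'\mathbf x+\lambda\mathbf x'.
\]
Left-multiplying by $\mathbf u$ cancels the $\mathbf x'$ terms because $\mathbf u\mathbf B\mathbf W=\lambda\mathbf u$, and $\mathbf u\mathbf x=1$, so $\lambda'=\mathbf u\mathbf B\mathbf e_i x_i$, which is \eqref{eq.lambdaprima}. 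The scaling of $\mathbf x$ and $\mathbf u$ is chosen so that $\mathbf u\mathbf x=1$; rescaling $\mathbf x$ and $\mathbf u$ oppositely does not change the product $x_i\mathbf u$. Symmetry of $\mathbf B$ is not really needed here. It would let me take $\mathbf u=(\mathbf W\mathbf x)^T$ up to scaling, since $(\mathbf W\mathbf x)^T\mathbf B\mathbf W=(\mathbf W\mathbf B\mathbf W\mathbf x)^T=\lambda(\mathbf W\mathbf x)^T$, which gives an explicit formula.

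For (d), uniqueness of $\mathbf x$ is part (b). For $\mathbf x'$, $\mathbf I-\mathbf M$ has one-dimensional kernel $\mathrm{span}(\mathbf x)$ by simplicity, so solutions of \eqref{eq.sensi} form an affine line $\mathbf x'_0+t\mathbf x$. Imposing $\mathbf 1\mathbf x'=0$ fixes $t$ uniquely because $\mathbf 1\mathbf x=N\neq0$. Existence follows either from (b) or from the fact that $\mathbf M'\mathbf x$ lies in the range of $\mathbf I-\mathbf M$. By Fredholm, the latter holds iff $\mathbf u\mathbf M'\mathbf x=0$, and indeed $\mathbf u\mathbf M'\mathbf x=\frac1\lambda(x_i\mathbf u\mathbf B\mathbf e_i-\lambda')=0$.

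The main obstacle is the smoothness in (b), specifically the invertibility of the bordered Jacobian. I will handle it via the left-eigenvector argument above, which rests on the positivity of $\mathbf u\mathbf x$ and the simplicity supplied by Theorem \ref{theo.PF}.
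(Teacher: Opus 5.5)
Your proof is correct. Part (a) follows the paper, and part (d) follows it in essence, but (b) and (c) take a different route.

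For (b), the paper does not set up a bordered system. It takes the cofactor vector $[\Gamma_{n1},\ldots,\Gamma_{nn}]^T$ of $\mathbf B\mathbf W-\lambda\mathbf I$. This vector lies in the one-dimensional kernel, is nonzero, and has entries polynomial in $w_i$ and $\lambda$, so it is differentiable once (a) is known. The paper then normalizes it.

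For (c), the paper differentiates $\lambda=\mathbf u\mathbf B\mathbf W\mathbf x$ and uses $\mathbf u'\mathbf x+\mathbf u\mathbf x'=0$. This forces it to assert separately that the left eigenvector $\mathbf u$ is differentiable. You differentiate $\mathbf B\mathbf W\mathbf x=\lambda\mathbf x$ and left-multiply by $\mathbf u$, which needs only $\mathbf x'$ and is the cleaner argument.

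What each buys: the cofactor construction is explicit and builds directly on (a). Your implicit-function argument yields $\lambda$ and $\mathbf x$ together and uses only the simplicity of $\lambda$ and $\mathbf 1\mathbf x\neq 0$. Its invertibility check (kernel $\mathrm{span}(\mathbf x)$, $\mathbf u\mathbf x\neq 0$, $\mathbf 1\mathbf x\neq 0$) is the same mechanism that drives (d).

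The one sentence you should add is why the implicit-function branch is the Perron one. The branch stays positive near $\mathbf x_0>0$, so the last item of the Perron--Frobenius Theorem identifies it with the normalized Perron vector.

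Your Fredholm check $\mathbf u\mathbf M'\mathbf x=0$ is also a more precise justification of the solvability of $(\mathbf I-\mathbf M)\mathbf x'=\mathbf M'\mathbf x$ than the paper's appeal to (a). Finally, with $\mathbf B$ symmetric, your choice $\mathbf u\propto(\mathbf W\mathbf x)^T$ gives the closed form $\lambda'=\lambda x_i^2/\sum_j w_jx_j^2$.
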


\begin{proof} The ingredients for a proof of Statements (a) and (b), as they appear here, can be found in Chapter 1 in \cite{R}. For the sake of completeness, let us provide the sketch of this proof. 

Let us consider the matrix $(\mathbf B\mathbf W)$. Let $p(x)$ be its characteristic polynomial. The coefficients of $p(x)$ are polynomials in $w_i$, so let us write, instead of $p(x)$, $p(x,w_i)$. For $w_i\in (0,+\infty)$, we know that $\lambda$ (that depends on $w_i$) is a simple root of $p(x)$, so $\frac{\partial p}{\partial x}(\lambda,w_i)\neq 0$. Then, by implicit differentiation, $\lambda'(w_i)$ exists and equals $\frac{\partial p}{\partial w_i}(\lambda,w_i)/(\frac{\partial p}{\partial x}(\lambda,w_i))$. This proves Statement (a).

For every $w_i\in(0,+\infty)$, the kernel of $(\mathbf B\mathbf W-\lambda\mathbf I)$ has rank 1. The vector $\widetilde{\mathbf x}=[\Gamma_{n1},\ldots,\Gamma_{nn}]^T$, where $\Gamma_{nj}$ denotes the corresponding cofactor in $(\mathbf B\mathbf W-\lambda\mathbf I)$, is  in the kernel of $(\mathbf B\mathbf W-\lambda\mathbf I)$, its entries are non-simultaneously zero and they are differentiable with respect to  $w_i$ (details appear in Chapter 1 in \cite{R} and in the \emph{Proof of Rellich's Lemma} in \cite{bellido2022rellich}). So the vector $\mathbf x=\widetilde{\mathbf x}/(\mathbf 1\widetilde{\mathbf x})$ is also differentiable with respect to $w_i$. This proves Statement (b).

Differentiating the equation \(\mathbf x = \mathbf M\mathbf x\) with respect to \(w_i\) and rearranging terms we get Equation \eqref{eq.sensi}. Taking derivatives in \(\mathbf M = \frac{1}{\lambda}\mathbf B\mathbf W\) with respect to \(w_i\) (recall that \(\mathbf B\) is constant) we obtain Equation \eqref{eq.Mprima}.

Let \(\mathbf u\) be the left Perron-Frobenius eigenvectors of \(\mathbf B\mathbf W\), normalized so that \(\mathbf u \mathbf x=1\). A proof similar to the one in Statement (b) shows that $\mathbf u'$ is also well defined.

We have that
\[
\lambda = \mathbf u(\mathbf B\mathbf W)\mathbf x.
\]
\noindent Differentiating the previous equation yields
\[
\lambda'= \mathbf u'(\mathbf B\mathbf W)\mathbf x \;+\;\mathbf u(\mathbf B\mathbf E_{ii})\mathbf x\;+\;\mathbf u(\mathbf B\mathbf W)\mathbf x'.
\]
\noindent Differentiating \(\mathbf u\mathbf x=1\), we obtain $\mathbf u'\mathbf x+\mathbf u\mathbf x'=0$. Substituting into the previous expression gives Equation \eqref{eq.lambdaprima}. This proves Statement (c).

For Statement (d), note that, for each value of $w_i\in U$, $(\mathbf I-\mathbf M)$ has rank $n-1$ (since the Perron–Frobenius eigenvalue $1$ is algebraically simple and the Linear System in Equation \eqref{eq.sensi} is compatible, as a consequence of Statement (a)). We have that $\ker(\mathbf I-\mathbf M)=\text{span}\{\mathbf x\}$ and $\mathbf 1 \mathbf x\neq 0$. Thus, if we add the row $\mathbf 1$ to $(\mathbf I-\mathbf M)$ we obtain a matrix of full rank $n$. Therefore, the augmented system has a unique solution. This proves Statement (d).

\end{proof}

Let us remark that, if we want to compute partial derivatives with respect to more than one weight $w_1,\ldots, w_n$, all the linear systems involved share the same coefficient matrix. This is computationally relevant.

\newpage
\subsection*{Sensitivity analysis with respect to the weights in an eigenvector model for our toy example}

For the example introduced in Section \ref{section.eigen}, let us recall that the adjacency matrix and the matrix of weights (as it was simulated to be estimated in Section \ref{section.flow}) are
\begin{equation}\label{eq.Mtoy}\mathbf A=\begin{bmatrix}0 & 1 & 0 \\ 1 & 0 & 1 \\ 0 & 1 & 0 \end{bmatrix},
\qquad\mathbf W=\begin{bmatrix}w_1& 0 & 0\\ 0 & w_2 & 0 \\ 0 &0 & w_3 \end{bmatrix}=\begin{bmatrix}2 & 0 & 0 \\ 0& \frac{1}{3}& 0  \\ 0& 0& 1\end{bmatrix}.   \end{equation}

Let us perform the sensitivity analysis of the model $\mathbf x=\mathbf A\mathbf W\mathbf x$ with respect to these parameters $w_1,w_2,w_3$, assuming that $\mathbf 1\mathbf x=500$ (and it is constant).

We need to compute the derivatives of $\mathbf x$ with respect to each of the variables $w_1,w_2,w_3$. To do so, we need to solve three linear systems as the one in Equation \eqref{eq.sensi}, but adding the final Equation \eqref{eq.ones} at each of them. After this, we obtain three linear systems, all of them with the same $4\times 3$ coefficient matrix:
$$\left[\begin{array}{c}\mathbf I-\mathbf M\\ \hline \mathbf 1\end{array}\right]=\left[\begin{array}{c c c}1 & -1/3 & 0 \\ -2 & 1 & -1 \\ 0 & -1/3 & 1 \\ \hline  1 & 1 & 1 \end{array}\right]. $$

\noindent If we arrange the solutions in a matrix whose entry $(i,j)$ is the derivative of $x_i$ with respect to $w_j$, we get:
$$\begin{bmatrix}
-10 & 90 & -10 \\[1mm]
20 & -180 & 20 \\[1mm]
-10 & 90 & -10
\end{bmatrix}.
$$

\noindent Finally, if from the information in the previous matrix we compute the elasticities with Equation \eqref{eq.defielas} and then we arrange these elasticities in another matrix following the same pattern, we get the result in Figure \ref{fig.heatmap}. We also present the information in a bar chart in Figure \ref{fig.elasticities}.

\begin{figure}[h]
\centering
\includegraphics[width=0.6\textwidth]{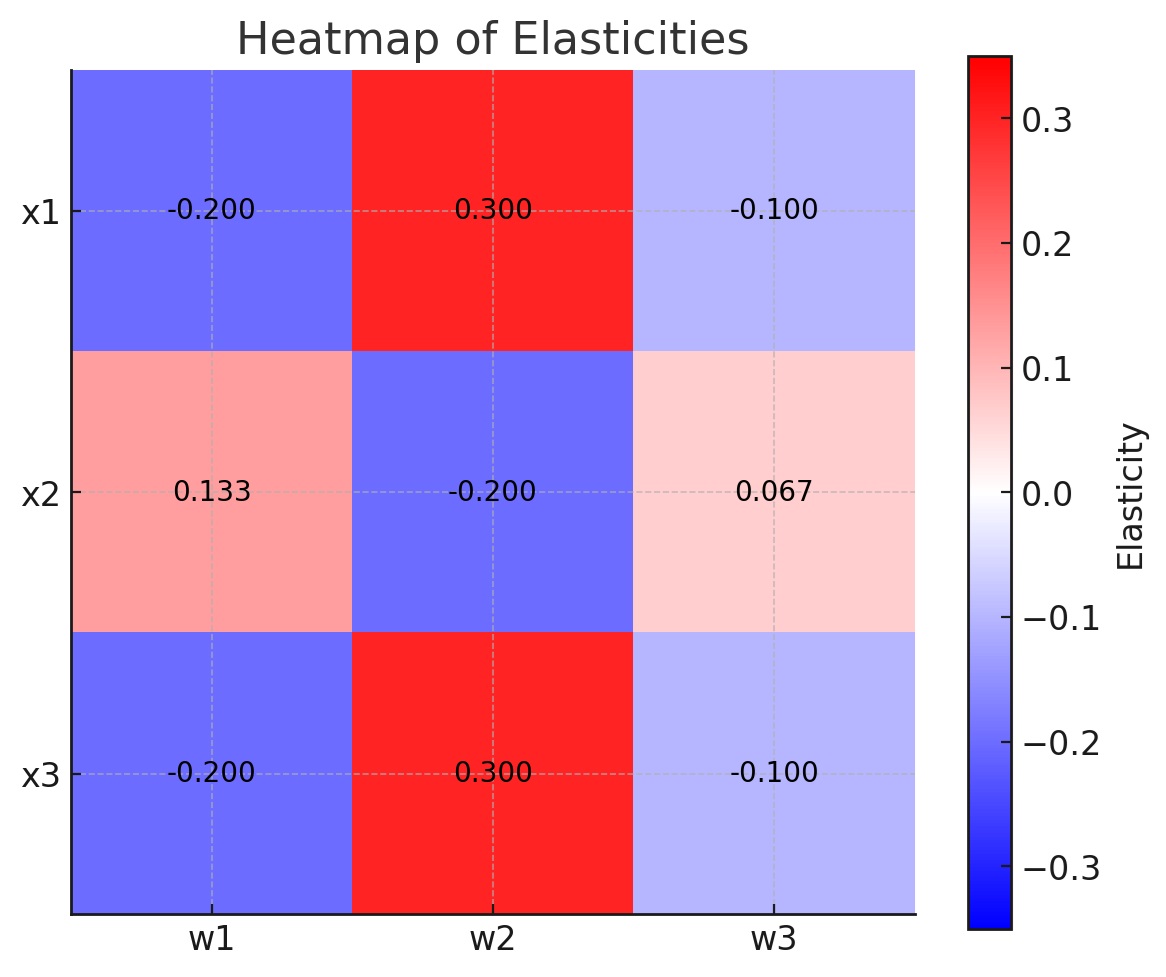}
\caption{Heatmap corresponding to the elasticities.}\label{fig.heatmap}
\end{figure}

\begin{figure}[h]
\centering
\includegraphics[width=0.6\textwidth]{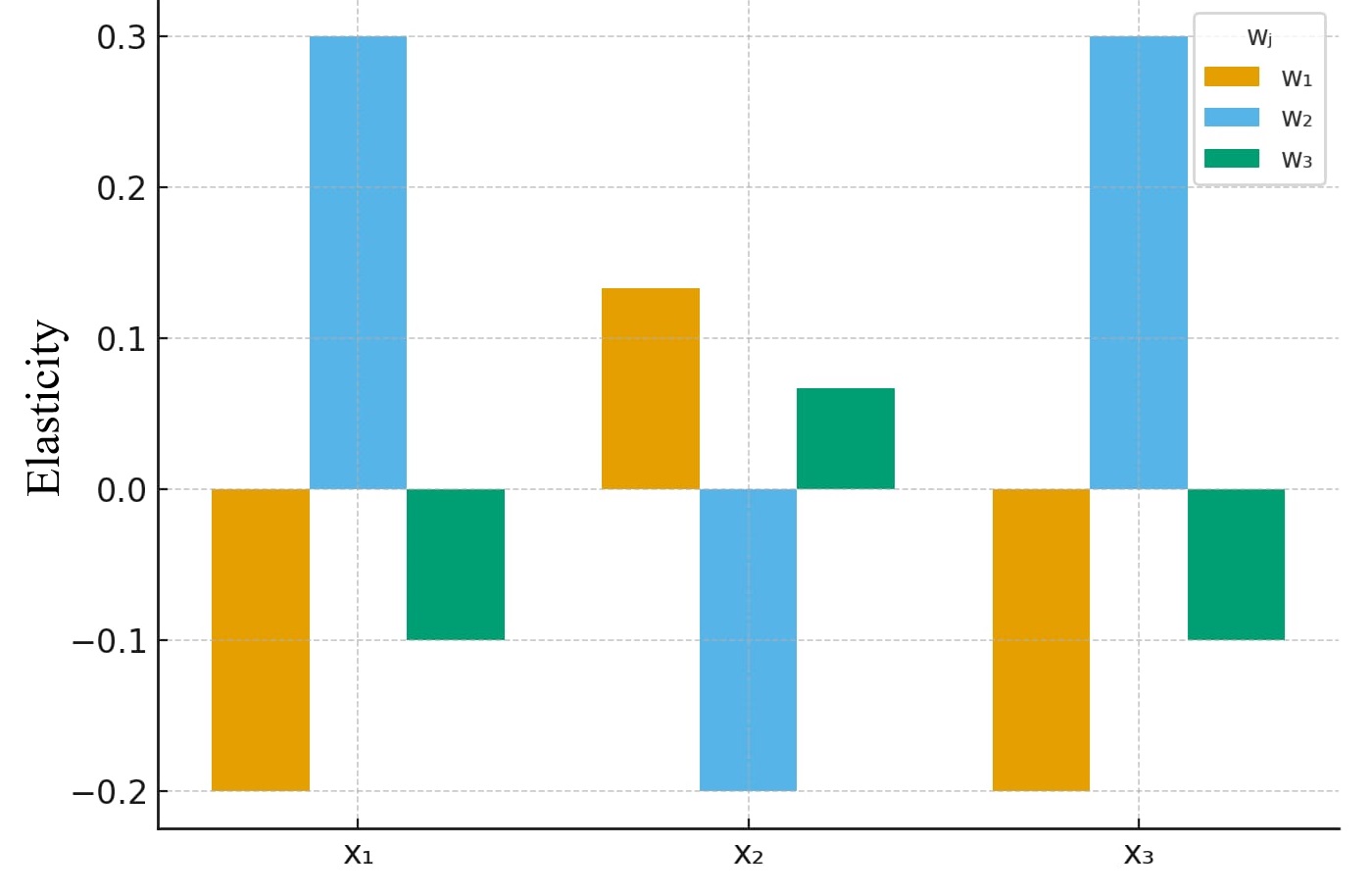}
\caption{Bar chart for the elasticities of $x_1,x_2,x_3$ with respect to $w_1,w_2,w_3$, respectively.}\label{fig.elasticities}
\end{figure}

As we can see:

\begin{itemize}

\item No elasticity is very large (all smaller than 0.5 in absolute value), indicating \emph{moderate effects}.

\item Mixed signs show that factors can act as \emph{incentives for some variables and disincentives for others}.

\item  $x_1,x_3$ depend positively on $w_2$ and negatively on $w_1,w_3$. 

\item \(x_2\) increases with \(w_1\) and \(w_3\) (though modestly) and decreases with \(w_2\) (that has a stronger negative effect).  

\item \(w_2\) is an important factor: it has opposite effects on \(x_1,x_3\) (positive) and \(x_2\) (negative).

\end{itemize}

\newpage
\subsection*{Sensitivity analysis for shifted eigenvector models}

Let us consider in this subsection eigenproblems of the type

\begin{equation} \label{eq.finalboss} \mathbf x=\mathbf M\mathbf x+\mathbf f,\qquad \mathbf M=\mathbf B\mathbf W,\end{equation}

\noindent for $\mathbf B$ being a symmetric, irreducible and nonnegative matrix, $\mathbf W$ being the diagonal matrix containing the (positivie) weights (if we do not want to use weights, we take $\mathbf W=\mathbf I$), $\rho(\mathbf M)<1$ and nonnegative $\mathbf f$.

This time, we may consider as a variable one of the weights or one of the entries in $\mathbf f$. To simplify the writing, let us name this variable as $t$. We have the following:

\medskip
\noindent \textbf{Normalization issue.} \emph{When we modify $t$, the matrix $\mathbf M=\mathbf B\mathbf W$ in Equation \eqref{eq.finalboss} varies accordingly so that}
$$\mathbf 1\mathbf x=\mathbf 1\mathbf M\mathbf x+\mathbf 1\mathbf f$$

\noindent  \emph{remains constantly equal to some positive value $N$. In other words, we are actually considering the model}
\begin{equation}\label{eq.normissue2}  \mathbf x=\mathbf M\mathbf x+\mathbf f,\qquad \mathbf M=\mu \mathbf B\mathbf W \end{equation}

\noindent \emph{where} 
\begin{equation}\label{eq.mu}\mu=\frac{N-\mathbf 1\mathbf f}{\mathbf 1\mathbf B\mathbf W\mathbf x} \end{equation}

\noindent \emph{(and so depends on $t$), only one of the entries among $\mathbf W,\mathbf d$ depend on $t$, and $\mathbf B$ is constant.}

\medskip

We have the following:

\begin{theorem} \label{theo.2} Let $\mathbf B$ be a symmetric, irreducible and nonnegative constant matrix. Let $\mathbf W$ be a diagonal matrix with diagonal entries $w_1,\ldots, w_n$. Let $\mathbf f=[f_1,\ldots, f_n]^T$ be a nonnegative and nontrivia vector. Suppose that $\rho(\mathbf B\mathbf W)<1$. Suppose that one of the parameters $t$ among $w_1,\ldots, w_n,f_1,\ldots, f_n$ is variable and the rest of them are constant.  Let $\mathbf W',\mathbf f',\mu'$ denote the partial derivatives of $\mathbf W,\mathbf f,\mu$, respectively, with respect to $t$.

Let $t_0$ be some value for the variable $t$. We denote by $\mathbf W_0,\mathbf M_0,\mathbf f_0$ to the matrices and vector obtained from $\mathbf W,\mathbf M,\mathbf f$ evaluating $t=t_0$ and let $\mathbf x_0$ be the solution of $\mathbf x_0=\mathbf B\mathbf W_0\mathbf x_0+\mathbf f_0$, .

Let us consider the Problem \eqref{eq.normissue2}, where $\mu$ is given as in Equation \eqref{eq.mu} and $N=\mathbf 1\mathbf x_0$.

\noindent (a)  There exists some neighborhood $U$ of $t_0$ such that, for every $t\in U$, the vector $\mathbf x$ is continuous with respect to $t$ and the partial derivative $\mathbf x'$ of $\mathbf x$ with respect to $t$ is well defined.

\noindent (b) For $t\in U$,
\begin{equation}\label{eq.magdalena} \mathbf C\mathbf x'=\mathbf D\mathbf x, \end{equation}

$$\text{for} \quad \begin{cases}\mathbf C=[(\mathbf x-\mathbf f)\mathbf 1\mathbf B\mathbf W+(\mathbf 1\mathbf B\mathbf W\mathbf x)\mathbf I-(N-\mathbf 1\mathbf f)\mathbf B\mathbf W],\\
\mathbf D=[-\mathbf 1\mathbf f'\mathbf B\mathbf W+(N-\mathbf 1\mathbf f)\mathbf B\mathbf W'+\mathbf f\mathbf 1\mathbf B\mathbf W'+\mathbf f'\mathbf 1\mathbf B\mathbf W-(\mathbf 1\mathbf B\mathbf W'\mathbf x)\mathbf I].\end{cases}$$

\noindent Recall that one of $\mathbf W',\mathbf f'$ is zero.

\noindent (c) Evaluating $t=t_0$, there is a unique solution $\mathbf x'$ for the Linear System \eqref{eq.magdalena} in the indeterminate $\mathbf x'$.

\end{theorem}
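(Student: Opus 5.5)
The plan is to clear the denominator in \eqref{eq.mu} and treat Problem \eqref{eq.normissue2} as a single smooth equation in $(\mathbf x,t)$. Put $s(\mathbf x,t)=\mathbf 1\mathbf B\mathbf W\mathbf x$ and
\[
\mathbf F(\mathbf x,t)=(\mathbf 1\mathbf B\mathbf W\mathbf x)(\mathbf x-\mathbf f)-(N-\mathbf 1\mathbf f)\,\mathbf B\mathbf W\mathbf x .
\]
This map is polynomial in $(\mathbf x,t)$, because $\mathbf W$ and $\mathbf f$ depend affinely on $t$.

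Wherever $s\neq 0$, $\mathbf F=\mathbf 0$ is equivalent to $\mathbf x=\mu\mathbf B\mathbf W\mathbf x+\mathbf f$ with $\mu$ as in \eqref{eq.mu}. Such a solution automatically satisfies $\mathbf 1\mathbf x=N$: multiplying $\mathbf F=\mathbf 0$ on the left by $\mathbf 1$ gives $s(\mathbf 1\mathbf x-\mathbf 1\mathbf f)=(N-\mathbf 1\mathbf f)s$. At $t=t_0$, the point $\mathbf x_0$ solves $\mathbf F=\mathbf 0$ with $\mu_0=1$, because $N=\mathbf 1\mathbf x_0$ and $\mathbf x_0-\mathbf f_0=\mathbf B\mathbf W_0\mathbf x_0$. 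Moreover $s_0>0$: since $\mathbf f_0\neq\mathbf 0$, Theorem \ref{theo.leontief}(c) makes $\mathbf x_0$ positive, and $\mathbf B\mathbf W_0$ is irreducible and nonnegative.

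For (b), I will differentiate $\mathbf F(\mathbf x(t),t)=\mathbf 0$ with respect to $t$ by the product rule. The terms containing $\mathbf x'$ are
\[
(\mathbf x-\mathbf f)\mathbf 1\mathbf B\mathbf W\mathbf x'+(\mathbf 1\mathbf B\mathbf W\mathbf x)\mathbf x'-(N-\mathbf 1\mathbf f)\mathbf B\mathbf W\mathbf x',
\]
which is exactly $\mathbf C\mathbf x'$. The remaining terms are
\[
-(\mathbf 1\mathbf B\mathbf W'\mathbf x)(\mathbf x-\mathbf f)+(\mathbf 1\mathbf B\mathbf W\mathbf x)\mathbf f'-(\mathbf 1\mathbf f')\mathbf B\mathbf W\mathbf x+(N-\mathbf 1\mathbf f)\mathbf B\mathbf W'\mathbf x ,
\]
where the third term comes from $\partial_t(N-\mathbf 1\mathbf f)=-\mathbf 1\mathbf f'$. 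These should regroup term by term into $\mathbf D\mathbf x$. This is routine bookkeeping once (a) supplies differentiability.

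For (a) and (c), I will apply the implicit function theorem to $\mathbf F$ at $(\mathbf x_0,t_0)$. The Jacobian $\partial\mathbf F/\partial\mathbf x$ is precisely $\mathbf C$, so (c) and the hypothesis needed for (a) both reduce to showing $\mathbf C_0$ is invertible. That invertibility is the main obstacle. Using $\mathbf M_0=\mathbf B\mathbf W_0$, $\mathbf x_0-\mathbf f_0=\mathbf M_0\mathbf x_0$ and $N-\mathbf 1\mathbf f_0=s_0$, I expect
\[
\mathbf C_0=s_0(\mathbf I-\mathbf M_0)+(\mathbf M_0\mathbf x_0)(\mathbf 1\mathbf M_0),
\]
a rank-one update of $s_0(\mathbf I-\mathbf M_0)$. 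By Theorem \ref{theo.leontief}(c), $\mathbf I-\mathbf M_0$ is invertible with nonnegative inverse. The matrix determinant lemma then gives
\[
\det\mathbf C_0=\det\bigl(s_0(\mathbf I-\mathbf M_0)\bigr)\Bigl(1+\tfrac{1}{s_0}\,\mathbf 1\mathbf M_0(\mathbf I-\mathbf M_0)^{-1}\mathbf M_0\mathbf x_0\Bigr).
\]
The bracket is at least $1$, since every factor is nonnegative. Hence $\det\mathbf C_0\neq 0$, which proves (c).

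Returning to (a), the implicit function theorem then yields a neighbourhood $U$ of $t_0$ and a smooth map $t\mapsto\mathbf x(t)$ with $\mathbf F(\mathbf x(t),t)=\mathbf 0$ and $\mathbf x(t_0)=\mathbf x_0$. Shrinking $U$ if necessary, continuity keeps $s>0$, so $\mu$ is well defined and $\mathbf x(t)$ solves Problem \eqref{eq.normissue2} with $\mathbf 1\mathbf x=N$. This gives continuity and differentiability of $\mathbf x$ on $U$. Moreover, by the local uniqueness part of the implicit function theorem, $\mathbf x(t)$ is the unique solution near $\mathbf x_0$, which is how I would read ``the vector $\mathbf x$'' in the statement.
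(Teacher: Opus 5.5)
Your proposal is correct. For (b) and (c) it coincides with the paper. The paper also clears the denominator of $\mu$ to get $(\mathbf 1\mathbf B\mathbf W\mathbf x)\mathbf x=(N-\mathbf 1\mathbf f)\mathbf B\mathbf W\mathbf x+(\mathbf 1\mathbf B\mathbf W\mathbf x)\mathbf f$ and differentiates it. It then proves $\mathbf C_0$ invertible by writing it as the rank-one update $s_0(\mathbf I-\mathbf M_0)+(\mathbf M_0\mathbf x_0)(\mathbf 1\mathbf M_0)$ and applying the matrix determinant lemma together with $(\mathbf I-\mathbf M_0)^{-1}\geq 0$.

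Where you genuinely differ is (a). The paper derives it from Theorem \ref{theo.leontief} and Cramer's rule applied to $(\mathbf I-\mathbf M)\mathbf x=\mathbf f$, choosing $U$ so that $\mathbf M$ stays irreducible with spectral radius below $1$. This treats $\mathbf M=\mu\mathbf B\mathbf W$ as a known differentiable function of $t$. But $\mu$ is itself defined through $\mathbf x$ by \eqref{eq.mu}, so the differentiability of $\mu$ is tacitly assumed there.

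You instead apply the implicit function theorem to the polynomial map $\mathbf F(\mathbf x,t)$, whose $\mathbf x$-Jacobian is exactly $\mathbf C$. The invertibility of $\mathbf C_0$ from (c) then gives existence, local uniqueness and smoothness of $t\mapsto\mathbf x(t)$ at once, and $\mathbf 1\mathbf x=N$ comes for free. The only cost is that the conclusion is local near $(\mathbf x_0,t_0)$, which is all the statement asks for. What you gain:
\begin{itemize}
\item the coupling between $\mu$ and $\mathbf x$ is handled properly;
\item the logical order becomes (c) $\Rightarrow$ (a) $\Rightarrow$ (b);
\item you explicitly justify $s_0>0$ via positivity of $\mathbf x_0$, which the paper uses without comment when it calls $s$ a nonzero scalar.
\end{itemize}
Your factor $1+s_0^{-1}\mathbf 1\mathbf M_0(\mathbf I-\mathbf M_0)^{-1}\mathbf M_0\mathbf x_0$ is also the correct form of the paper's $\Delta$, which drops the first factor $\mathbf M$.
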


\begin{proof} Let $U$ be an open interval containing $t_0$, such that if $t\in U$, then $\mathbf M$ is irreducible, nonnegative, and has spectral radius less than 1 and  $\mathbf f$ is nonnegative and nontrivial. Statement (a) is a consequence of Theorem \ref{theo.leontief} and of Cramer's rule, applied to the consistent and independent linear system $(\mathbf I-\mathbf M)\mathbf x=\mathbf f$.

From Equations \eqref{eq.normissue2} and \eqref{eq.mu} we get
$$(\mathbf 1\mathbf B\mathbf W\mathbf x)\mathbf x=(N-\mathbf 1\mathbf f)\mathbf B\mathbf W\mathbf x+(\mathbf 1\mathbf B\mathbf W\mathbf x)\mathbf f. $$

\noindent Taking derivatives with respect to $t$ and re-arranging the terms, we obtain Equation \eqref{eq.magdalena} and prove Statement (b).






For $t=t_0$, we have that 
$$N-\mathbf 1\mathbf f_0=\mathbf 1\mathbf M_0\mathbf x_0,\qquad \mathbf x_0-\mathbf f_0=\mathbf M_0\mathbf x_0.$$

\noindent  So if we evaluate $\mathbf C$ at $t=0$, we obtain
$$\mathbf C_0= [(\mathbf M_0\mathbf x_0)\mathbf 1\mathbf M_0+(\mathbf 1\mathbf M_0\mathbf x_0)\mathbf I-(\mathbf 1\mathbf M_0\mathbf x_0)\mathbf M_0].$$

Let us introduce the short-hand notation
\[
\mathbf v := \mathbf M_0\mathbf x_0,\qquad
s := \mathbf{1}\mathbf M_0 \mathbf x_0 ,\qquad
\mathbf w := \mathbf{1}\mathbf M_0,
\]

\noindent so
$$\mathbf C_0 = \mathbf v\mathbf w + s(\mathbf I-\mathbf M_0).$$

Since the spectral radius of $\mathbf M_0$ is less than 1, $(\mathbf I-\mathbf M)$ is invertible. So
$$\mathbf C_0 = s(\mathbf I-\mathbf M)(\mathbf I+\frac{1}{s}(\mathbf I-\mathbf M)^{-1}\mathbf v\mathbf w).$$

The factor \(s(\mathbf I-\mathbf M)\) is invertible (product of the nonzero scalar \(s\) and the invertible matrix \((\mathbf I-\mathbf M)\)), so \(\mathbf C_0\) is invertible if and only if the determinant $\Delta$ of the factor in parentheses is invertible. The latter is a rank-one perturbation of the identity. Let $\mathbf u := s^{-1}(\mathbf I-\mathbf M)^{-1} \mathbf v$. Then the matrix in the parentheses equals $\mathbf I+\mathbf u\mathbf w$. Apply the \emph{Sherman--Morrison Formula} for determinant:
\[
\Delta=\; 1 + \mathbf w \mathbf u .
\]

\noindent Substituting back \(\mathbf w=\mathbf{1}\mathbf M\) and \(\mathbf v=\mathbf M\mathbf x\) gives
\[
 \Delta=\Big(1 + s^{-1}\,\mathbf{1}(\mathbf I-\mathbf M)^{-1}\mathbf M \mathbf x\Big).
\]

\noindent A product of nonnegative matrices is always nonnegative, so $\Delta>0$.







\end{proof}

\subsection*{Sensitivity analysis with respect to forced occupancy coefficents in a shifted eigenvector model for our toy example}

Consider the eigenvector model with offset:
\begin{equation} \label{eq.matrix}
\mathbf x = \alpha
\begin{bmatrix} 
0 & \frac{1}{3} & 0 \\ 
2 & 0 & 1 \\ 
0 & \frac{1}{3} & 0
\end{bmatrix} \mathbf x + 
\begin{bmatrix} 0 \\ 100 \\ 100 \end{bmatrix}.
\end{equation}

This model corresponds to the toy example introduced earlier. 

\begin{itemize}

\item The matrix on the right-hand side of Equation \eqref{eq.matrix} is proportional to $\mathbf M = \mathbf A \mathbf W$, for the matrices $\mathbf A,\mathbf W$ appearing in Equation \eqref{eq.Mtoy}. The parameters in $\mathbf W$ were estimated on a day without classes, and we assume that the \emph{relative attractiveness} of the spaces does not change in a day with classes.

\item We consider a forced occupancy equal to 100 at nodes $V_2$ and $V_3$ (corresponding to scheduled classes). 

\item We assume that total number of students remains constant regardless of classes. So the parameter $\alpha$ is chosen so that the sum of the entries in $\mathbf x$ equals 500. In this case, $\alpha = \frac{3}{5}$, and the vector of centralities determined by Equation \eqref{eq.matrix} is
\[
\mathbf x = 
\begin{bmatrix} 50 \\ 250 \\ 150 \end{bmatrix}.
\]
\end{itemize}

Next, consider 
\begin{equation}\label{eq.final}
\mathbf x = \mu\alpha
\begin{bmatrix} 
0 & \frac{1}{3} & 0 \\ 
2 & 0 & 1 \\ 
0 & \frac{1}{3} & 0
\end{bmatrix} \mathbf x +
\begin{bmatrix} 0 \\ 100 \\ f_3 \end{bmatrix},
\end{equation}
where $f_3$ is variable (is the variable $t$ in the previous theorem) and $\mu$ depends on $f_3$ to maintain $\mathbf 1 \mathbf x = 500$.


Performing the sensitivity analysis using the previous theorem, we obtain the derivative of $\mathbf x$ with respect to $f_3$ which, evaluated at $f_3 = 100$, is
\[
\mathbf x'(100) \approx 
\begin{bmatrix} -0.278 \\ -0.443 \\ 0.772 \end{bmatrix}.
\]

This is consistent with expectations: increasing the forced occupancy at $V_3$ raises the centrality at $V_3$, while the model predicts a larger negative impact on $V_2$ than on $V_1$.

\section{Final remarks} \label{section.inveigen}


As explained in Section \ref{section.eigen}, one possibility is to define \emph{centrality} for all vertices, weighting contributions by distance so that more distant nodes contribute less \cite{Nourian2016}. But the models in that section are not the only possibility, as mentioned in \cite{Newman2010}. Another option is related to a concept known as \emph{integration} in the literature of urban networks.






\medskip

\noindent\textbf{Shifted inverse eigenvector models for centrality.} \emph{Let $\mathbf M$ be a nonnegative, symmetric, fully indecomposable matrix suitably associated to an urban network $(V,E)$. We define the \textbf{inverse eigenvector of centralities associated to $\mathbf M$} as the (unique up to scaling) positive inverse eigenvector of $\mathbf M$, that is, the (unique up to scaling) positive vector $\mathbf x$ satisfying}
\begin{equation}\label{eq.inv}
\lambda \overline{\mathbf x}=\mathbf M \, {\mathbf x}+\mathbf f,
\end{equation}

\noindent \emph{for some positive real $\lambda$,  where $\overline{\mathbf x}$ denotes the column vector containing the reciprocals of the entries in $\mathbf x$.}

\medskip

For example, we can take $\mathbf M=\mathbf D$, where $\mathbf D$ denotes the \textbf{distance matrix} of an urban network $(V,E)$, that is, $\mathbf D$ is the $n \times n$ matrix with entries $d_{ij}$ in the notation of Section \ref{section.eigen}.


Such problems are particular cases of \emph{nonlinear eigenproblems}. For the case $\mathbf f=\mathbf 0$ are sometimes referred to in the literature as \emph{inverse eigenvector problems} or \emph{power equations}. While not widely known among non-specialist, they arise naturally in certain applied mathematics contexts, such as the \emph{Theory of Power in Networks} or the \emph{Optimal Plank Theorem} (see, for example, \cite{BF,O}). This has motivated their study, although results are scattered across several articles and a unified treatment has yet to emerge.

The existence and uniqueness of a positive solution of Problem \eqref{eq.inv} is studied by the general theory of nonlinear Perron--Frobenius operators \cite{nussbaum1986,lemmens2012}. 
This theory extends classical results to certain nonlinear maps, including the reciprocal eigenvector problem. The reader may also find some information about the connection of Problem \eqref{eq.inv} to the \emph{Balancing Problem}, and how to solve it numerically in \cite{BF}. Let us include the most important facts about this.

\begin{theorem}[Combination of Theorems 1, 2 in \cite{BF}] Let $\mathbf M$ be a symmetric nonnegative and fully indecomposable square matrix.  The solution of the Problem \eqref{eq.inv} exists and it is unique, for each fixed value $\lambda$.

\end{theorem}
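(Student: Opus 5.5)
The problem $\lambda\overline{\mathbf x}=\mathbf M\mathbf x+\mathbf f$ can be recast as a \emph{matrix balancing} (diagonal scaling) problem and handled with a variational argument. Writing it componentwise gives $x_i(\mathbf M\mathbf x)_i+x_if_i=\lambda$ for every $i$. So the requirement is that the diagonal scaling $\mathbf X\mathbf M\mathbf X$, with $\mathbf X=\operatorname{diag}(\mathbf x)$, together with the shift term $x_if_i$, has all row sums equal to $\lambda$. For fixed $\lambda>0$ I will look for $\mathbf x$ as the critical point of the strictly convex function
\[
\Phi(\mathbf y)=\tfrac12\sum_{i,j} m_{ij}e^{y_i+y_j}+\sum_i f_i e^{y_i}-\lambda\sum_i y_i ,\qquad x_i=e^{y_i}.
\]
Because $\mathbf M$ is symmetric, $\partial\Phi/\partial y_i = x_i(\mathbf M\mathbf x)_i+f_ix_i-\lambda$. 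Positive solutions of Problem \eqref{eq.inv} are therefore exactly the critical points of $\Phi$ on $\mathbb R^n$.

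\textbf{Convexity.} $\Phi$ is a sum of exponentials of linear forms with nonnegative coefficients, minus a linear term, so it is convex. For strict convexity I use that $\mathbf M$ is fully indecomposable. This implies (Frobenius--König) that $\mathbf M$ has a positive diagonal, say $m_{i\sigma(i)}>0$ for some permutation $\sigma$, and that the bipartite graph of its support is connected. The linear forms $y_i+y_j$ with $m_{ij}>0$ then span $(\mathbb R^n)^*$: a direction $\mathbf h$ with $h_i+h_j=0$ on the support propagates along connected bipartite paths, and the existence of odd cycles, guaranteed by full indecomposability in the symmetric case, forces $\mathbf h=0$. Hence the Hessian is positive definite, and $\Phi$ has at most one critical point. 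This gives uniqueness for each fixed $\lambda$.

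\textbf{Existence (main obstacle).} I must show $\Phi$ is coercive, i.e.\ $\Phi(\mathbf y)\to\infty$ as $\|\mathbf y\|\to\infty$, so that a minimizer exists.
\begin{itemize}
\item In a direction $\mathbf h$ with some $h_i>0$, I want an exponential term to blow up. I will use full indecomposability in Hall-type form: $\mathbf M$ has total support, and any index set $S$ has neighbourhood with $|N(S)|>|S|$ unless $S$ is empty or everything.
\item In a direction with $\max_i h_i\le 0$ and $\mathbf h\ne 0$, the term $-\lambda\sum h_i>0$ grows linearly while the exponentials stay bounded.
\item In the genuinely mixed case, I need some pair with $m_{ij}>0$ and $h_i+h_j>0$ whenever $\sum_i h_i\ge 0$ and $\mathbf h\neq 0$. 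This is where full indecomposability is essential, via a König-type argument: if $h_i+h_j\le 0$ on the whole support, summing over a positive diagonal gives $2\sum h_i\le 0$, with equality forcing $h_i+h_{\sigma(i)}=0$ for every diagonal in the support. Connectivity then gives $\mathbf h=0$ as in the convexity step.
\end{itemize}
The term $\sum f_ie^{y_i}$, with $\mathbf f\ge\mathbf 0$, only helps coercivity. Once coercivity holds, $\Phi$ has a minimizer $\mathbf y^*$, and $\mathbf x=e^{\mathbf y^*}$ is the positive solution. I expect this asymptotic analysis in mixed-sign directions to be the delicate step, and I will follow the structure of Theorems 1, 2 in \cite{BF}.
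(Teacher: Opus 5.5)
Your argument is correct, and it takes a different route from the paper. The paper gives no argument of its own: it defers to Theorems 1 and 2 of \cite{BF}, i.e.\ to the matrix-balancing (symmetric Sinkhorn--Knopp) theory of the unshifted power equation, with nonlinear Perron--Frobenius theory as background. You instead characterize positive solutions as critical points of the strictly convex potential $\Phi$. Everything then reduces to two combinatorial consequences of full indecomposability: total support (every positive entry lies on a positive diagonal) and connectivity of the bipartite graph of $\mathbf M$.

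Your route is self-contained, and more substantively it handles the shift. The cited results concern $\mathbf f=\mathbf 0$, and the paper's closing remark even suggests the shifted case escapes them. Your potential absorbs any $\mathbf f\ge\mathbf 0$ at no cost, so you actually prove the statement as written. The positive definite Hessian also gives smooth dependence of $\mathbf x$ on $\lambda$, $\mathbf f$ and $\mathbf M$ by the implicit function theorem, which fits the paper's sensitivity theme. The balancing route, in exchange, comes with the Sinkhorn--Knopp iteration and its convergence guarantees, and for $\mathbf f=\mathbf 0$ it gives existence already under total support.

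Two things to tidy up. First, your first bullet is false as stated. Take $\mathbf M=\mathbf J-\mathbf I$ ($\mathbf J$ the all-ones matrix) with $n=3$, $\mathbf f=\mathbf 0$ and $\mathbf h=(1,-2,-2)$. Then $h_1>0$, but $h_i+h_j<0$ on the whole support, so no exponential term grows; $\Phi(t\mathbf h)$ still diverges, but only through the linear term. The Hall-type expansion property is not needed. The clean dichotomy is:
\begin{itemize}
\item $\sum_i h_i<0$: here $\Phi(t\mathbf h)\ge-\lambda t\sum_i h_i\to\infty$, because the exponential terms are nonnegative;
\item $\sum_i h_i\ge 0$ with $\mathbf h\neq\mathbf 0$: your positive-diagonal argument settles this case completely.
\end{itemize}
Second, passing from divergence along every ray to bounded sublevel sets uses the convexity of $\Phi$: an unbounded closed convex sublevel set containing $\mathbf 0$ contains a whole ray from $\mathbf 0$. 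Say so explicitly.
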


To solve numerically the problem above, we refer the reader to \cite{BF,SK}. The iterative scheme described there is essentially the \emph{Sinkhorn--Knopp Algorithm} for matrix balancing, which guarantees convergence under full indecomposability.

\medskip

\noindent \textbf{Remark.} \emph{Note that not every matrix in the context of inverse eigenvector models for centrality satisfy the hypothesis for the Theorem. Moreover, we can also consider inverse eingenvector problems with shifting. This case can be reduced to an inverse eigenvector problems where the corresponding matrix does not satisfy the hypothesis in the theorem. We omit details.}


\backmatter

\medskip
\noindent \textbf{Author contributions} All authors contributed equally to this work.

\medskip

\noindent \textbf{Funding} The current investigation received no funding from any sources.

\section*{Statements and declarations}

The authors declare that they have no competing interests that could have influenced the work reported in this article.

\bibliography{sn-bibliography}

@article{agryzkov2019centrality,
  title={A centrality measure for urban networks based on the eigenvector centrality concept},
  author={Agryzkov, Ivan and Tortosa, Leandro and Vicent, Jos{\'e} F. and Wilson, Richard},
  journal={Environment and Planning B: Urban Analytics and City Science},
  volume={46},
  number={4},
  pages={668--689},
  year={2019},
  publisher={SAGE Publications, London, UK},
  doi={10.1177/2399808317724444},
  url={https://doi.org/10.11772399808317724444}
}

@article{BP,
  title={Existence and uniqueness of solutions for Leontief's Input--Output Model, graph theory and sensitivity analysis},
  author={Bellido, Jos{\'e} Carlos and Prieto-Mart{\'\i}nez, Luis Felipe},
  journal={Linear and Multilinear Algebra},
  volume={73},
  number={9},
  pages={2103--2124},
  year={2025},
  publisher={Taylor \& Francis, London, UK},
  doi={10.1080/03081087.2024.2313638},
  url={https://doi.org/10.1080/03081087.2024.2313638}
}

@article{bellido2022rellich,
  title={A Rellich's result revisited and sensitivity of solutions of parametrized linear systems},
  author={Bellido, Jos{\'e} Carlos and Prieto-Mart{\'\i}nez, Luis Felipe},
  journal={arXiv preprint arXiv:2301.13164},
  year={2022}
}

@book{bjorck1996numerical,
  title={Numerical Methods for Least Squares Problems},
  author={Bj{\"o}rck, {\AA}ke},
  year={1996},
  publisher={SIAM},
  address={Philadelphia, PA, USA}
}

@article{bonacich1972factoring,
  title={Factoring and weighting approaches to status scores and clique identification},
  author={Bonacich, Phillip},
  journal={Journal of Mathematical Sociology},
  volume={2},
  number={1},
  pages={113--120},
  year={1972},
  publisher={Taylor \& Francis},
  doi={10.1080/0022250X.1972.9989806}
}

@article{bonacich1987power,
  title={Power and centrality: A family of measures},
  author={Bonacich, Phillip},
  journal={American Journal of Sociology},
  volume={92},
  number={5},
  pages={1170--1182},
  year={1987},
  publisher={University of Chicago Press},
  doi={10.1086/228631}
}

@article{BF,
  title={A theory on power in networks},
  author={Bozzo, Enrico and Franceschet, Massimo},
  journal={Communications of the ACM},
  volume={59},
  number={11},
  pages={75--83},
  year={2016},
  publisher={ACM, New York, NY, USA},
  doi={10.1145/2934665},
  url={https://doi.org/10.1145/2934665}
}

@book{Chung1997,
  title={Spectral Graph Theory},
  author={Chung, Fan R. K.},
  year={1997},
  publisher={American Mathematical Society},
  address={Providence, RI, USA}
}

@article{Crucitti2006,
  title={Centrality measures in spatial networks of urban streets},
  author={Crucitti, Paolo and Latora, Vito and Porta, Sergio},
  journal={Physical Review E},
  volume={73},
  number={3},
  pages={036125},
  year={2006},
  publisher={APS},
  doi={10.1103/PhysRevE.73.036125},
  url={https://doi.org/10.1103/PhysRevE.73.036125}

}

@book{erlander1990gravity,
  title={The gravity model in transportation analysis: theory and extensions},
  author={Erlander, Sven and Stewart, Neil F.},
  year={1990},
  publisher={VSP},
address={Utrecht, Netherlands}
}

@book{ersel2011linear,
  title={Linear algebra for economists},
  author={Ersel, Hasan and Piontkovski, Dmitri},
  year={2011},
  publisher={Springer Science \& Business Media},
  address={Berlin, Germany}
}

@book{Golub2013,
  title={Matrix Computations},
  author={Golub, Gene H. and Van Loan, Charles F.},
  edition={4th},
  year={2013},
  publisher={Johns Hopkins University Press},
  address={Baltimore, MD, USA},
  isbn={9781421407944}
}

@book{HH,
  author={Hillier, Bill and Hanson, Julienne},
  title={The Social Logic of Space},
  publisher={Cambridge University Press},
  year={1989},
  address={Cambridge, UK}
}

@book{horn2012matrix,
  title={Matrix Analysis},
  author={Horn, Roger A. and Johnson, Charles R.},
  year={2012},
  edition={2nd},
  publisher={Cambridge University Press},
  address={Cambridge, UK}
}

@article{JC,
  title={Topological analysis of urban street networks},
  author={Jiang, Bin and Claramunt, Christophe},
  journal={Environment and Planning B: Planning and design},
  volume={31},
  number={1},
  pages={151--162},
  year={2004},
  publisher={SAGE Publications Sage UK: London, England},
doi={10.1068/b306}
}

@article{Katz1953,
  title={A new status index derived from sociometric analysis},
  author={Katz, Leo},
  journal={Psychometrika},
  volume={18},
  number={1},
  pages={39--43},
  year={1953},
  publisher={Springer},
  doi={10.1007/BF02289026}
}

@book{lemmens2012,
  author={Lemmens, Bas and Nussbaum, Roger D.},
  title={Nonlinear Perron--Frobenius Theory},
  publisher={Cambridge University Press},
  year={2012},
  address={Cambridge, UK}
}

@book{Newman2010,
  title={Networks: An introduction},
  author={Newman, Mark},
  year={2010},
  publisher={Oxford University Press},
  address={Oxford, UK},
  isbn={9780199206650}
}

@article{niu2023technical,
  title={The technical approach of using mobile positioning data to support urban population size monitoring},
  author={Niu, Xinyi and Lin, Shijia and Qin, Sixian and Yue, Yufeng},
  journal={Frontiers of Urban and Rural Planning},
  volume={1},
  number={1},
  year={2023},
  publisher={Springer},
doi={10.1007/s44243-023-00013-y}
}

@article{Nourian2016,
  title={Spectral modelling for spatial network analysis},
  author={Nourian, Pirouz and Rezvani, Samaneh and Sariyildiz, Sevil and van der Hoeven, Franklin},
  journal={Proceedings of the Sumposium on Simulation for Architecture and Urban Design (simAUD 2016)},
  volume={11},
  pages={103--110},
  year={2016}
}

@article{nussbaum1986,
  author={Nussbaum, Roger D.},
  title={Convexity and log convexity for the spectral radius},
  journal={Linear Algebra and its Applications},
  volume={73},
  pages={59--122},
  year={1986},
  doi={10.1016/0024-3795(86)90233-8}
}

@article{O,
  title={An optimal plank theorem},
  author={Ortega-Moreno, {\'O}scar},
  journal={Proceedings of the American Mathematical Society},
  volume={149},
  number={3},
  pages={1225--1237},
  year={2021},
  doi={10.1090/proc/15228},
  url={https://doi.org/10.1090/proc/15228}
}

@article{Porta2006,
  title={The network analysis of urban streets: A dual approach},
  author={Porta, Sergio and Crucitti, Paolo and Latora, Vito},
  journal={Physica A: Statistical Mechanics and its Applications},
  volume={369},
  number={2},
  pages={853--866},
  year={2006},
  publisher={Elsevier},
  doi={10.1016/j.physa.2005.12.063},
  url={https://doi.org/10.1016/j.physa.2005.12.063}

}

@book{R,
  title={Perturbation theory of eigenvalue problems},
  author={Rellich, Franz},
  year={1969},
  publisher={CRC Press},
address={New York, NY, USA}
}

@book{searle1997linear,
  title={Linear Models},
  author={Searle, Shayle R.},
  year={1997},
  publisher={Wiley},
  address={Hoboken, NJ, USA}
}

@book{Seneta2006,
  title={Non-negative Matrices and Markov Chains},
  author={Seneta, Eugene},
  year={2006},
  publisher={Springer},
  address={New York, NY, USA}
}

@article{SK,
  title={Concerning nonnegative matrices and doubly stochastic matrices},
  author={Sinkhorn, Richard and Knopp, Paul},
  journal={Pacific Journal of Mathematics},
  volume={21},
  number={2},
  pages={343--348},
  year={1967},
  doi={10.2140/pjm.1967.21.343},
  url={https://doi.org/10.2140/pjm.1967.21.343}
}

@book{strang2009introduction,
  title={Introduction to Linear Algebra},
  author={Strang, Gilbert},
  year={2022},
  edition={6th},
  publisher={Wellesley-Cambridge Press},
  address={Wellesley, MA, USA}
}

@book{strang2016la4e,
  title={Linear Algebra and Learning from Data},
  author={Strang, Gilbert},
  year={2019},
  publisher={Wellesley-Cambridge Press},
  address={Wellesley, MA, USA}
}

@book{varga2009matrix,
  title={Matrix Iterative Analysis},
  author={Varga, Richard S.},
  year={1962},
  publisher={Springer},
  address={Berlin, Germany}
}

\end{document}